\documentclass[11pt]{elsarticle}
\usepackage{setspace}
\usepackage[plain]{fullpage}
\usepackage{algorithm}
\usepackage{algorithmicx}
\usepackage[noend]{algpseudocode}
\usepackage{algpascal}
\usepackage{graphicx}

\usepackage{graphics}
\usepackage{epsfig}
\usepackage{enumerate}
\usepackage{color}
\usepackage{amsmath, amssymb, amsthm}

\newtheorem{thm}{Theorem}[section]
\newtheorem{prop}[thm]{Proposition}
\newtheorem{lem}[thm]{Lemma}
\newtheorem{cor}[thm]{Corollary}

\newtheorem{rmk}[thm]{Remark}
\newtheorem{claim}[thm]{Claim}




\journal{}

\begin{document}
\begin{frontmatter}
\title{Complexity of Improper  Twin Edge Coloring of Graphs\tnoteref{MAD}}
\tnotetext[MAD]{Part of this research was carried out while Saieed Akbari was visiting Istanbul Center for Mathematical Sciences (IMBM) whose support is greatly acknowledged.}

\author[sharif]{Paniz Abedin}

\author[sharif]{Saieed Akbari}

%
\author[rmit]{Marc Demange}
\author[bu]{T{\i}naz Ekim\corref{corr}}
%
%
%
\cortext[corr]{Corresponding Author}
\address[sharif]{Department of Mathematical Sciences, Sharif University of Technology, 11155-9415, Tehran, Iran}
\address[rmit]{School of Science, RMIT University, Melbourne, Victoria, Australia}
\address[bu]{Department of Industrial Engineering, Bogazici University, Istanbul, Turkey}

\begin{abstract}
 Let $G$ be a graph whose each component has order at least 3. Let $s : E(G) \rightarrow \mathbb{Z}_k$ for some integer $k\geq 2$ be an improper edge coloring of $G$ (where adjacent edges may be assigned the same color). If the induced vertex coloring $c : V (G) \rightarrow \mathbb{Z}_k$ defined by $c(v) = \sum_{e\in E_v} s(e) \mbox{ in } \mathbb{Z}_k,$ (where the indicated sum is computed in $\mathbb{Z}_k$ and $E_v$ denotes the set of all edges incident to $v$) results in a proper vertex coloring
of $G$, then we refer to such a coloring as an  improper twin $k$-edge coloring. The minimum $k$ for which $G$ has an improper twin $k$-edge coloring is called
the improper twin chromatic index of $G$ and is denoted by $\chi'_{it}(G)$.\\
It is known that $\chi'_{it}(G)=\chi(G)$, unless $\chi(G)=2 \pmod 4$
and in this case  $\chi'_{it}(G)\in \{\chi(G), \chi(G)+1\}$. In this paper, we first give a short proof of this result and we show that if $G$ admits an improper twin $k$-edge coloring for some positive integer $k$, then $G$ admits an improper twin $t$-edge coloring for all $t\geq k$; we call this the monotonicity property. In addition, we provide a linear time algorithm to construct an improper twin edge coloring using at most $k+1$ colors, whenever a $k$-vertex coloring is given. Then we investigate, to the best of our knowledge the first time in literature, the complexity of deciding whether $\chi'_{it}(G)=\chi(G)$ or
$\chi'_{it}(G)=\chi(G)+1$, and  we show that, just like in case of the edge chromatic index, it is NP-hard even in some restricted cases. Lastly, we exhibit several classes of graphs for which the problem is polynomial.

\end{abstract}

\begin{keyword}
Modular chromatic index, Twin edge - vertex coloring, Odd - even color classes, NP-hardness
\end{keyword}

\end{frontmatter}

\section{Introduction}\label{sec:intro}

\subsection{Definitions and Literature}

Throughout this paper all graphs are simple with no loops and multiple edges.
Let $G$ be a graph. We denote the vertex set and the edge set of $G$, by $V(G)$ and $E(G)$, respectively (or simply $V$ and $E$ when no ambiguity may occur). For $v\in V(G)$, $E_v$ denotes the set of
 all edges incident to $v$ and $N_G(v)$ denotes the (open) neighborhood of $v$ in $G$; when no ambiguity may occur we just denote $N(v)=N_G(v)$. We denote respectively by $K_n$, $P_n$ and $C_n$ the complete graph of order $n$, the path or order $n$ and the cycle of order $n$. A {\it $k$-clique} is a set of $k$ vertices inducing a complete graph and an independent set is a set of vertices pairwise not linked by an edge (i.e., inducing a subgraph with only isolated vertices). $K_{p,q}$ will denote the complete bipartite graph with parts of size $p$ and $q$. If $p=1$, then it is  star graph and the vertex of the part of size 1 is called  center.
 
 A {\it (proper) $k$-vertex coloring} of $G$ is an assignment of $k$ colors to the vertices of $G$ such that no two adjacent vertices have the same color (possibly some colors are not used). 
 Given such a $k$-vertex coloring, a color class is the set of vertices of the same color; it is an independent set. A graph having a $k$-vertex coloring is said $k$-vertex colorable. 
 The {\it chromatic number} $\chi(G)$ of $G$ is the minimum $k$ for which $G$ is $k$-vertex colorable.  
 
 A {\it $k$-edge coloring} of $G$ is an assignment of $k$ colors to the edges of $G$ (possibly some colors are not used); it is said to be {\it proper} if two edges with a common endpoint do not have the same color, and {\it improper} if it is not necessarily proper. Note that the distinction between proper and improper colorings exists as well for vertex colorings but since we will not use it (formally since all our vertex colorings are proper) we specify proper/improper  for edge colorings only. The minimum $k$ such that $G$ admits a proper $k$-edge coloring is the {\it chromatic index} of $G$  denoted by $\chi'(G)$. Integers will be used to label colors, which will give us the opportunity to compute operations.
   
 A {\it walk} in $G$ is a  sequence $W=v_0e_1v_1e_2v_2 \ldots, e_kv_k$, whose terms are alternately vertices and edges, such that $e_i=v_{i-1}v_i$, for $1\leq i \leq k$. The number $k$ is called the length of the walk. A walk is called a {\it closed walk} if $v_0=v_k$. 
The reader is referred to~\cite{golumbic} for all definitions in graph theory which are not defined here.

For any integer $k\geq 2$, we will denote by $\mathbb{Z}_k= \mathbb{Z}/k\mathbb{Z}$ the additive group over $\{0, \ldots, k-1\}$. For $x,y,z$ in  $\mathbb{Z}_k$, $x+y=z$ in $\mathbb{Z}_k$ is also denoted $z=x+y \mod k$. For any two integers, we will denote $x\equiv y \pmod k$ if $0= (x-y) \mod k$.

In the literature, we can find a number of results about proper or improper edge colorings {\it inducing} a coloring for vertices by mapping, for any vertex $v$, the multiset of colors of the edges 
adjacent to $v$ to a vertex color. Such a mapping is called \textit{vertex-coloring mapping}; see \cite{colorinduced} for variations of such colorings. Given a vertex-coloring mapping, an edge-coloring is called \textit{vertex-distinguishing} if the colors induced by the edge coloring are different for every vertex. The edge-coloring is called \textit{neighbor-distinguishing} if only adjacent vertices have different induced colors, see e.g.  \cite{ vertexcol1} and \cite{vertexcol2}. Neighbor-distinguishing edge colorings are also called \textit{vertex coloring edge partitions} in \cite{vertexcol2}. Chapter 13 of \cite{chromaticGT} is devoted to distinguishing colorings. 

These notions strongly vary with respect to  the vertex coloring mapping, i.e., the way the vertex colors are induced from the edge colors. For instance, in \cite{vertexcol2}, the color of the vertex $v$ is directly the multiset of the colors in $E_v$ while in \cite{vertexcol1}, the color of a vertex is obtained as the sum of the colors of the edges incident to it and the related parameter is called \textit{sum chromatic number}. This definition gives rise to the ``1-2-3 Conjecture'' which states that for every connected graph $G$ of order at least 3, there exists a neighbor-distinguishing edge coloring of $G$ using only 3 colors. 
A variation of the sum chromatic number obtained using Abelian groups is called the \textit{group sum chromatic number} and studied in \cite{groupsum}; it is the smallest value $s$ such that taking any Abelian group $\mathcal{G}$ of
order $s$, there exists a function $f : E(G) \longrightarrow \mathcal{G}$ such that the sums of edge colors properly color the vertices i.e., with different colors on adjacent vertices.

Another popular vertex coloring mapping uses the specific group $\mathbb{Z}_k$, where $k$ is at least the number of colors used for the edges. If the edges are colored using colors between 0 and $k-1$, seen as elements of $\mathbb{Z}_k$, then one assigns to a vertex $v$  the sum in $\mathbb{Z}_k$ of the colors of the edges in $E_v$. More formally, let $G$ be a  graph the components of which have order at least 3. Let  $s : E(G) \rightarrow \mathbb{Z}_k$ for some integer $k\geq 2$ be a  $k$-edge coloring of $G$. We then define  $c_s : V (G) \rightarrow \mathbb{Z}_k$  as follows:
$$ \forall v\in V, c_s(v) = \sum_{e\in E_v} s(e) \mbox{ in } \mathbb{Z}_k.$$ If $c_s$ is a (proper) $k$-vertex coloring, then it will be called the {\it vertex coloring induced by $s$} and $s$ is said to {\it induce} $c_s$. 


A  proper edge coloring $s$ inducing a vertex coloring $c_s$ is called a {\it proper twin $k$-edge coloring} of $G$. This concept was defined in \cite{and} where it has been shown that any graph with components of size at least 3 has at least one such coloring. Then, the minimum $k$ for which $G$ has a proper twin $k$-edge coloring is called the {\it proper twin chromatic index of} $G$  and is denoted by $\chi'_{t}(G)$.

Similarly, an improper (i.e. not necessarily proper) edge coloring $s$ inducing a vertex coloring $c_s$ is called an {\it improper twin $k$-edge coloring} of $G$. The difference between both notions is only whether one deals with proper or improper edge colorings, but the related vertex coloring is always required to be proper. Since a proper edge coloring is improper as well, the existence of improper twin $k$-edge colorings for some $k$ comes from the existence of proper ones. The minimum $k$ for which $G$ has an improper twin $k$-edge coloring is called the {\it improper twin chromatic index} of $G$  and is denoted by $\chi'_{it}(G)$. The same parameter was already studied under the name of \textit{modular chromatic index}, denoted by $\chi_m'(G)$, in \cite{JCMCC11} and \cite{JCMCC12}\footnote{We 
use the term ``improper twin chromatic index'' following the terminology in \cite{and} since it allows us to distinguish between the proper and the improper cases.}.

 Note that $\chi_{it}'(G)$ and $\chi'_{t}(G)$ are not defined for graphs having a component which is $K_2$. Following the terminology of~\cite{vertexcol2} we call \textit{nice} a graph without a component of size~2. However, the definitions are valid for graphs with isolated vertices by considering that an empty sum is null and then defining $c_s(v)=0$ for all isolated vertices $v$ and all twin $k$-edge colorings, $k\geq 2$.  All the graphs considered from now on will be supposed to be nice.
 By definition, for every nice graph $G$ we have: $\chi'_{t}(G)\geq \chi'_{it}(G)\geq \chi(G)$.

In this paper, we deal with 
improper twin $k$-edge colorings 
 of $G$. 
   Note that although they are very close, the group sum chromatic number and the improper twin chromatic index are not the same. In fact, for a graph $G$,  $ \chi'_{it}(G)$ does not exceed the  group sum chromatic number of $G$. Here is an example where the group sum chromatic number is greater than the improper twin chromatic index. Consider the disjoint union of $K_4$ and $K_2$ and make one vertex of $K_4$ adjacent to one vertex of $K_2$ and
call the resultant graph by $G$. Note that $G$ has order 6 with degree sequence: 4,3,3,3,2,1. The group sum chromatic number of $G$ is 5 since this graph is ugly (see Definition 1.1 of \cite{groupsum}) and ugly graphs have group sum chromatic number of $\chi(G)+1$ (see Theorem 1.2 of \cite{groupsum}). However, the improper twin chromatic index of $G$ is 4 by Theorem \ref{thm:mainfromJCMCC} reminded below.
Let us conclude the definitions with a concrete example of improper twin edge coloring. Let $P$ be the Petersen graph. Since $P$ contains an odd cycle, $\chi(P)\geq 3$. The assignment of numbers to $E(P)$ given in Figure \ref{example} shows that $\chi'_{it}(P)=3$.

\begin{figure}[htbp]
\begin{center}
\includegraphics[scale=.17]{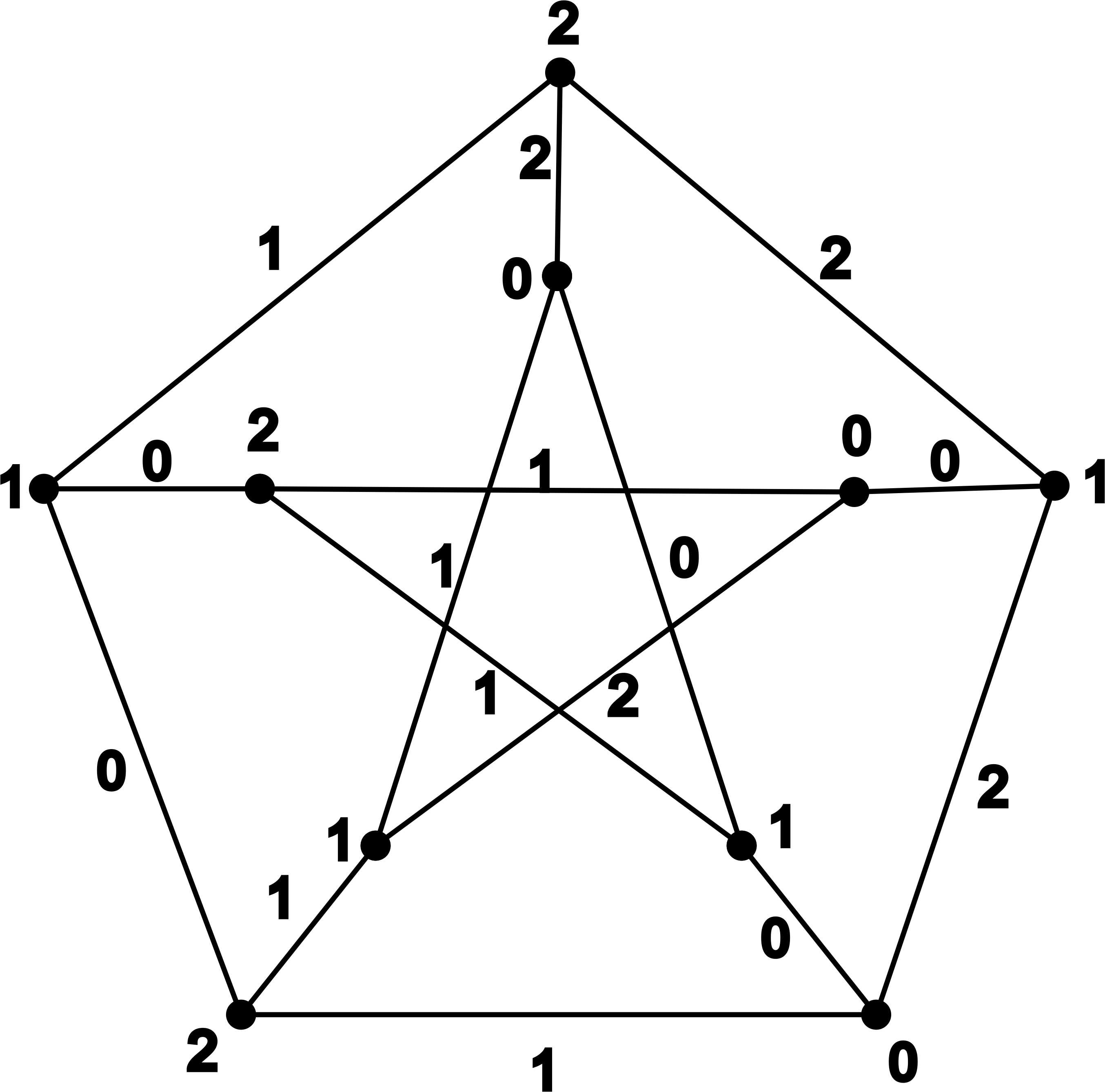}
\caption{Improper twin 3-edge coloring of the Petersen graph.}\label{example}
\end{center}
\end{figure}

\subsection{Preliminaries}
Let us now first underline that some natural properties of the classical edge and vertex colorings do not hold for the improper twin edge coloring problem.


\begin{rmk}\label{rem0}
The notions of improper twin edge coloring and induced vertex coloring strongly depend on the labels assigned to the colors. In particular they are not stable under a reordering of the colors.  
\end{rmk}
Consider indeed the star graph $K_{1,2}$; the 2-vertex coloring assigning 0 to the center and 1 to the two other vertices is induced by the edge coloring assigning 1 to both edges while the 2-vertex coloring assigning 1 to the center and 0 to the two other vertices is not induced by any improper twin edge coloring. Actually this 2-edge coloring is the unique improper 2-edge coloring. Similarly, considering a $P_4$, the edge coloring assigning color 1 to two adjacent edges and 0 to the third edge is an improper 2-edge coloring while the edge coloring assigning 0 to two adjacent edges and 1 to the last edge is not. 
\begin{rmk}\label{rem1}
Unlike the usual vertex/edge coloring problems, there are some graphs $G$ with a subgraph $H$ such that $\chi'_{it}(G)< \chi'_{it}(H)$. 
\end{rmk}

Indeed, using Theorem~\ref{thm:mainfromJCMCC} (see below), we consider $m = 2 \mod 4$ and $n = 0 \mod 4$ with $n>m>2$; then $\chi'_{it}(P_m)=3$ and $\chi'_{it}(C_n)=2$. Taking $G=C_n$ and $H=P_m$ raises infinitely many examples where taking a subgraph increases the improper twin chromatic index.

\begin{rmk}\label{rem2} 
Unlike the usual edge coloring problem, there are some graphs $G$ with an improper twin $k$-edge coloring $s$  such that the restriction of $s$ to $E(G)- e$ for some edge $e$ is not necessarily an improper twin $k$-edge
coloring of $G-e$. \end{rmk}

Indeed, a triangle admits an improper twin 3-edge coloring with colors 0, 1 and 2 on its three edges, however, removing the edge of color 2 leaves a graph with an edge coloring which does not induce a proper vertex coloring.  Note however that removing an edge colored with 0 obviously leaves an improper twin $k$-edge coloring in the remaining graph.

\begin{rmk}\label{rem3}
Unlike the usual vertex/edge coloring problems, an improper twin $k$-edge coloring is not necessarily an improper twin $t$-edge coloring for  $t\geq k$.  
\end{rmk}

This is due to the $\mod t$ or $\mod k$ operations respectively used in the definition of twin $t$- and twin $k$-edge coloring.
Consider indeed a star with a center vertex $a$ and four pending vertices $b,c,d,e$. The edge coloring assigning color 1 to all the four edges is an improper twin 2-edge coloring but not an improper twin 3-edge coloring. 

\subsection{Our contribution}

In this paper we investigate the improper twin edge chromatic index of graphs and we consider, to the best of our knowledge, the first time, the complexity of a neighbor-distinguishing edge coloring problem. 
The following summarizes the main contribution of \cite{JCMCC11} and \cite{JCMCC12}:

\begin{thm}(\cite{JCMCC11, JCMCC12}) \label{thm:mainfromJCMCC}
Let $G$ be a connected graph of order at least 3. Then $\chi'_{it}(G) =\chi(G) +1$ if and only if $\chi(G) \equiv 2 \pmod 4$ and every proper $\chi(G)$-vertex coloring of $G$ results in color classes of odd size; and $\chi'_{it}(G) =\chi(G)$ in all other cases.
\end{thm}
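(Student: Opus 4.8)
The plan is to reduce the statement to a solvability question for a linear system over $\mathbb{Z}_k$, and from there to a single parity invariant. The inequality $\chi'_{it}(G)\ge\chi(G)$ is immediate because the vertex coloring $c_s$ induced by any improper twin $k$-edge coloring is proper and therefore needs at least $\chi(G)$ colors, so it remains to decide, for $k=\chi(G)$ and for $k=\chi(G)+1$, whether $G$ has an improper twin $k$-edge coloring. Fixing a target map $c:V\to\mathbb{Z}_k$, an improper $k$-edge coloring $s$ with $c_s=c$ is exactly a solution of the system $\sum_{e\in E_v}s(e)=c(v)$ $(v\in V)$ over $\mathbb{Z}_k$. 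The key lemma I would establish is the following solvability criterion for connected $G$: if $G$ is not bipartite the system is solvable whenever $k$ is odd, and whenever $k$ is even and $\sum_{v\in V}c(v)$ is even; if $G$ is bipartite with parts $A$ and $B$ it is solvable whenever $\sum_{v\in A}c(v)\equiv\sum_{v\in B}c(v)\pmod k$. (These conditions are in fact also necessary, but only sufficiency is needed here.) To prove it I would take a spanning tree $T$, set $s\equiv 0$ on $E\setminus E(T)$, root $T$, and peel off leaves: each tree edge is forced to a value determined by the targets below it, and after summing all equations the unique remaining constraint (at the root) reduces to $\sum_{v\in A}c(v)\equiv\sum_{v\in B}c(v)\pmod k$, where $(A,B)$ is the bipartition of $T$. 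When $G$ is not bipartite, some edge $f=xy\notin E(T)$ lies inside one part of $(A,B)$; assigning $f$ an undetermined value $t$ turns that constraint into $2t\equiv\sum_{v\in A}c(v)-\sum_{v\in B}c(v)\pmod k$, which can be satisfied exactly when $k$ is odd or $\sum_{v\in V}c(v)$ is even.

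Next I would settle the case $\chi'_{it}(G)=\chi(G)$, writing $\chi=\chi(G)$. Start from a proper $\chi$-vertex coloring with nonempty classes $V_1,\dots,V_\chi$; if $\chi\equiv 2\pmod 4$ and $G$ is not in the exceptional case, choose one of these colorings having a color class of even size. Assign the colors $0,1,\dots,\chi-1$ to the classes by a bijection $\sigma$ and let $c$ be constant equal to $\sigma(i)$ on $V_i$; then $c$ is a proper $\chi$-coloring. If $\chi\ge 3$ then $G$ is not bipartite, and if moreover $\chi$ is odd the lemma already gives a solution; if $\chi\equiv 0\pmod 4$, or $\chi\equiv 2\pmod 4$ but some $V_i$ has even size, a short count over the $\chi/2$ odd and $\chi/2$ even labels shows that $\sigma$ can be chosen so that $\sum_{v\in V}c(v)=\sum_i|V_i|\sigma(i)$ is even, and again the lemma applies. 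The only configuration in which this count fails is ``$\chi/2$ odd and every $V_i$ odd'', i.e. precisely the exceptional case. Finally, if $\chi=2$ then $G$ is bipartite; the even-size class guarantees $\sum_{v\in A}c(v)\equiv\sum_{v\in B}c(v)\pmod 2$ for a suitable $2$-coloring, and the bipartite part of the lemma finishes it. Hence $\chi'_{it}(G)=\chi$ outside the exceptional case.

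It remains to treat the exceptional case: $\chi\equiv 2\pmod 4$ and every proper $\chi$-coloring of $G$ has only odd classes. For the lower bound, assume $s$ is an improper twin $\chi$-edge coloring; then $c_s$ is a proper $\chi$-coloring using all $\chi$ colors, so, all its classes being odd, $\sum_{v\in V}c_s(v)\equiv\sum_{\ell=0}^{\chi-1}\ell=\tfrac{\chi(\chi-1)}{2}\pmod 2$, which is odd since $\chi\equiv 2\pmod 4$. But $\sum_{v\in V}c_s(v)\equiv\sum_v\sum_{e\in E_v}s(e)=2\sum_{e\in E}s(e)\pmod\chi$, and as $\chi$ is even this forces $\sum_{v\in V}c_s(v)$ to be even, a contradiction; hence $\chi'_{it}(G)\ge\chi+1$. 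For the upper bound take $k=\chi+1$, which is odd. If $G$ is not bipartite, any proper $\chi$-coloring (with colors lying in $\{0,\dots,\chi-1\}\subseteq\mathbb{Z}_{\chi+1}$) is realizable by the lemma. If $G$ is bipartite, then $\chi=2$, both parts $A,B$ have odd size, and $G\neq K_2$, so one part, say $A$, has at least three vertices; coloring all of $B$ with $0$ and the vertices of $A$ with colors from $\{1,2\}$ is proper, and since $|A|\ge 3$ the number of $1$'s can be chosen so that $\sum_{v\in A}c(v)\equiv 0\equiv\sum_{v\in B}c(v)\pmod 3$, whence the bipartite part of the lemma yields an improper twin $3$-edge coloring. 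This gives $\chi'_{it}(G)=\chi+1$ and completes the proof.

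The step I expect to be the real work is the solvability lemma, and then keeping the parity bookkeeping honest in every case — in particular the label-counting argument and the small bipartite (star) configurations. Once the lemma and the identity $\sum_{v}c_s(v)\equiv 2\sum_{e}s(e)\pmod k$ are in hand, the rest is organizing the four residue classes of $\chi$ modulo $4$.
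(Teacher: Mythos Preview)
Your proposal is correct and follows essentially the same approach as the paper. Your ``solvability lemma'' is exactly the combination of the paper's Lemma~\ref{lem:tree} (the spanning-tree peel-off), Lemma~\ref{odd-cycle} (the odd-cycle adjustment), and Propositions~\ref{oddc} and~\ref{equal}; your use of a single non-tree edge $f$ with both endpoints in the same part of the tree bipartition is just the fundamental cycle version of the paper's odd closed walk, and your parity bookkeeping in the $\chi\bmod 4$ case split mirrors the paper's Cases~1 and~2 (the paper swaps the labels $0,1$ on two classes of different parity, which is one concrete realisation of your ``choose $\sigma$'' step). The only cosmetic difference is that you package the bipartite and non-bipartite criteria into one lemma up front, whereas the paper treats bipartite graphs separately in Theorem~\ref{bip}; your handling of the bipartite $k=3$ upper bound via the congruence $\sum_{A}c(v)\equiv\sum_{B}c(v)\pmod 3$ is a small variation on the paper's ad-hoc edge adjustment there.
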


It should be noted that this result is obtained as a combination of several results from  \cite{JCMCC11} and \cite{JCMCC12}. In Section \ref{sec:general}, we provide a shorter and more compact proof of Theorem \ref{thm:mainfromJCMCC}.  
Actually we obtain a more precise result	
(see Theorem~\ref{thm:final}) describing a constructive way to build in linear time an improper twin edge coloring using $k$ or $k+1$ colors, whenever a $k$-vertex coloring is provided.


In~\cite{JCMCC12} it is shown that for non-bipartite connected graphs $G$ and odd $t\geq \chi(G)$, any $t$-vertex coloring  can be induced by an improper $t$-edge coloring. We give an alternative proof of this result (Proposition~\ref{oddc}) and give 
a necessary and sufficient condition for a $t$-vertex coloring with even $t$ being induced by an improper twin $t$-edge coloring in a non-bipartite connected graph (Proposition~\ref{equal}).

Moreover, these two results, together with Theorem~\ref{bip} 
 allow us to establish the following \emph{monotonicity} property of the improper twin chromatic index for graphs: if a graph $G$ admits an improper twin $k$-edge coloring for some positive integer $k$, then it admits an improper twin $t$-edge coloring for all $t\geq k$ (Theorem~\ref{thm:at least k}). Note that following Remark~\ref{rem3}, unlike the classical coloring problems, this property is not straightforward for the improper twin edge coloring problem.


The characterization of graphs having $\chi_{it}'(G)=\chi(G)+1$ emphasizes  the closely related problem  of deciding whether all color classes are of  odd size in all optimal colorings. In Section \ref{sec:complexity}, we use this new problem to show that, interestingly enough, just like in the case of usual edge chromatic index which can only take two possible values, it is NP-complete to decide whether $\chi'_{it}(G)= \chi(G)$ even when the graph $G$ is restricted to have $\chi(G)$ equal to its maximum clique size and in the bounded degree case.   

Finally, in Section \ref{sec: poly}, we give first examples of graph classes for which this question can be decided in polynomial time. These classes include bipartite graphs, split graphs, co-chordal graphs and co-comparability graphs with bounded chromatic number.

\section{Bounds and Monotonicity for the Improper Twin Edge Chromatic Index}\label{sec:general}
The contribution of this section is two-fold. First, we provide a short and compact proof of the bounds $\chi(G) \leq \chi'_{it}(G)\leq \chi(G)+1$ together with a description of the graphs having $\chi'_{it}(G)= \chi(G)+1$. Second, we prove the monotonicity of the improper twin chromatic index.

We first settle both the monotonicity and the bounds for bipartite graphs. More precisely, we show that for every nice bipartite graph $G$ and every positive integer $t\geq 3$, $G$ admits an improper twin $t$-edge coloring.

Given a connected graph $G=(V,E)$, a $t$-vertex coloring  $c$ and an improper twin edge coloring $s$ using colors in $\mathbb{Z}_t$, we will say that $s$ {\em almost induces $c$} if there is a vertex $v\in V$ such that $\forall u\in V\setminus\{v\}, c(u)=c_s(u)$. In this case $v$ is called the {\em defective vertex}. Note that 
	if $s$ induces $c$ then it almost induces it and any vertex can be chosen as defective. Moreover if $s$ is almost inducing $c$, either it induces it or there is a unique possible defective vertex. The following lemma plays a crucial role in our approach. 
\begin{lem}\label{lem:tree}
	Given connected graph $G=(V(G),E(G))$, a vertex $t$-coloring $c$ 
of $G$ ($t\geq 1$), a vertex $v\in V$ and a spanning tree $T$, there is an improper twin edge coloring $s$ that almost induces $c$ with $v$ as defective vertex. Moreover $\forall  e\in E(G)\setminus E(T), s(e)=0$.
\end{lem}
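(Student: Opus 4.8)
The plan is to root the given spanning tree $T$ at $v$ and to choose the colours of the tree edges by a single bottom-up sweep, leaving every non-tree edge coloured $0$ as the statement demands. The case $t=1$ is vacuous (everything lies in the trivial group), so assume $t\ge 2$. With $T$ rooted at $v$, every vertex $u\in V(G)\setminus\{v\}$ has a unique parent $p(u)$; write $e_u=u\,p(u)\in E(T)$ and note that $u\mapsto e_u$ is a bijection of $V(G)\setminus\{v\}$ onto $E(T)$, so each tree edge is ``owned'' by its endpoint farther from $v$. I set $s(e):=0$ for all $e\in E(G)\setminus E(T)$. Then I process the vertices $u\ne v$ in some order of non-increasing distance from $v$ in $T$ (the leaves of the rooted tree first, the children of $v$ last); when $u$ is reached, each $e_w$ with $w$ a child of $u$ has already been assigned a colour, so I may put
$$s(e_u):=c(u)-\sum_{w\ \text{child of}\ u}s(e_w)\pmod t .$$
This colours every tree edge exactly once.

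For the verification, fix a vertex $u\ne v$. The edges of $G$ incident to $u$ are $e_u$, the edges $e_w$ joining $u$ to its children $w$, and possibly some non-tree edges; the latter contribute $0$ to the induced sum, so
$$c_s(u)=s(e_u)+\sum_{w\ \text{child of}\ u}s(e_w)=c(u)$$
by the defining equation for $s(e_u)$. Hence $c_s$ coincides with $c$ on all of $V(G)\setminus\{v\}$, that is, $s$ almost induces $c$ with $v$ as defective vertex, and $s(e)=0$ for every $e\notin E(T)$ by construction. I note in passing that properness of $c$ is never invoked, so the statement in fact holds for an arbitrary map $c\colon V(G)\to\mathbb{Z}_t$.

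The one point that needs care --- and the step I would flag as the main (if mild) obstacle --- is the order in which the tree edges are coloured. A tree edge $e_u$ enters the induced sum at both of its endpoints $u$ and $p(u)$, so the colours cannot be chosen vertex by vertex in isolation. The bottom-up order works precisely because, once $s(e_u)$ has been fixed so as to make $c_s(u)=c(u)$, it cannot afterwards spoil $c_s(p(u))$: the parent $p(u)$ is handled still later, and at that moment $e_u$ is merely one of its already-coloured child edges whose contribution gets absorbed into the single new value $s(e_{p(u)})$. An equivalent way to organise the argument is induction on $|V(G)|$: remove a leaf $\ell\ne v$ of $T$ at maximum distance from $v$, with tree edge $f=\ell\,p$; apply the statement to $G-\ell$ with spanning tree $T-f$ and with the target colouring altered only at $p$, where it becomes $c(p)-c(\ell)$; then extend by $s(f):=c(\ell)$ and $s:=0$ on the remaining edges incident to $\ell$. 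The possibly non-proper value $c(p)-c(\ell)$ showing up here is another indication that the hypothesis on $c$ need not include properness.
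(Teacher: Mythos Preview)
Your proof is correct and follows essentially the same approach as the paper's: root $T$ at $v$, assign $0$ to non-tree edges, and colour the tree edges by a bottom-up sweep from the leaves toward the root so that each non-root vertex receives the prescribed sum. Your version is simply more explicit---giving the formula $s(e_u)=c(u)-\sum_{w}s(e_w)$ and a verification---whereas the paper compresses the same procedure into a single sentence.
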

	
	\begin{proof}
We direct the edges of $T$ from the root to the leaves using $v$ as the root. We set $\forall  e\in E(G)\setminus E(T), s(e)=0 $. Then, starting from the deepest level of $T$ and going up to the root, color the edges $e$ of $T$ with colors $s(e)$ in $\{0, \ldots ,t-1\}$ to ensure that for every vertex $u\neq v$, the color of $u$ corresponds to the sum in $\mathbb{Z}_t$ of the colors of edges of $T$ incident to $u$. Since $s(e)=0$ for the edges 
$e\in E(G)\setminus E(T)$, it means that $\forall u\neq v, c(u)=c_s(u)$. 
	\end{proof}

In the configuration described in Lemma~\ref{lem:tree}, we will say that $s$ is {\em associated with $T$ and $v$}. Note that this notion is not unique: several different improper twin $t$-edge colorings can be associated with the same tree $T$ and $v$.  	
	
	\begin{thm}\label{bip} 
If $G=(X\cup Y,E)$ is a nice bipartite graph with vertex parts $X$ and $Y$, then  the following statements hold:
\begin{enumerate}
\item[i)] \cite{JCMCC11} $\chi'_{it}(G)=2$ if and only if 
$G$ does not have any connected component with both parts of the bipartition of odd size.
\item[ii)] 
For any integer $t\geq 3$ the graph $G$ admits an improper twin $t$-edge coloring.
\end{enumerate}
\end{thm}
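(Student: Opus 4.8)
The plan is to route both parts through a single criterion, which I will call the \emph{Main claim}: for a connected bipartite graph $H$ with bipartition $(X,Y)$, a proper $t$-vertex coloring $c$ of $H$ is induced by an improper twin $t$-edge coloring of $H$ if and only if $\sum_{x\in X}c(x)\equiv\sum_{y\in Y}c(y)\pmod t$. Granting this, part~(i) (which is anyway due to \cite{JCMCC11}) would follow at $t=2$: since $\chi'_{it}(G)\ge\chi(G)=2$, we have $\chi'_{it}(G)=2$ exactly when every component admits an improper twin $2$-edge coloring; a connected bipartite component has only the two proper $2$-colorings that are constant on each side, and the Main claim tells us such a coloring is inducible iff the side colored~$1$ has even size, so a component admits an inducible proper $2$-coloring iff one of its sides has even size, i.e.\ iff it does not have both sides of odd size. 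For part~(ii) it would then suffice, for each $t\ge 3$, to exhibit on every component one proper $t$-coloring satisfying the congruence and to take the disjoint union, since there are no edges between components and an isolated vertex contributes the empty sum~$0$.

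To prove the Main claim I would argue as follows. For necessity, bipartiteness gives that every edge has exactly one endpoint in $X$ and one in $Y$, so for any improper twin $t$-edge coloring $s$ we have $\sum_{x\in X}c_s(x)\equiv\sum_{e\in E(H)}s(e)\equiv\sum_{y\in Y}c_s(y)\pmod t$; if $c$ is induced by $s$ then $c=c_s$ and the congruence holds. For sufficiency, I would fix any $v\in X$ and any spanning tree $T$ of $H$ and take, by Lemma~\ref{lem:tree}, an improper twin edge coloring $s$ that almost induces $c$ with $v$ as defective vertex and satisfies $s(e)=0$ for every $e\notin E(T)$. Since $c_s$ then agrees with $c$ on every vertex other than $v$, and $v\notin Y$, the same counting gives $\sum_{e\in E(H)}s(e)\equiv\sum_{y\in Y}c_s(y)=\sum_{y\in Y}c(y)\pmod t$. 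Combining this with the identity $\sum_{w\in V(H)}c_s(w)\equiv 2\sum_{e\in E(H)}s(e)\pmod t$ (each edge counted at both of its ends) yields
\[
c_s(v)\equiv 2\sum_{e\in E(H)}s(e)-\sum_{w\ne v}c(w)\equiv 2\sum_{y\in Y}c(y)-\Big(\sum_{y\in Y}c(y)+\sum_{x\in X\setminus\{v\}}c(x)\Big)=\sum_{y\in Y}c(y)-\sum_{x\in X\setminus\{v\}}c(x).
\]
By the hypothesis $\sum_{x\in X}c(x)\equiv\sum_{y\in Y}c(y)$, the right-hand side equals $c(v)$ in $\mathbb{Z}_t$, so $c_s=c$ and $s$ induces $c$.

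To complete part~(ii) I would construct the required coloring on a connected bipartite component $H$ with $|V(H)|\ge 3$. At least one side then has at least two vertices, say $|Y|\ge 2$ (interchanging the names of the sides if necessary). I would color all of $X$ with $0$, give $|Y|-2$ of the vertices of $Y$ the color~$1$, and color the remaining two vertices of $Y$ with two values from $\{1,\dots,t-1\}$ whose sum is $-(|Y|-2)\pmod t$; this is possible because the sums of two elements of $\{1,\dots,t-1\}$ range over the $2t-3$ consecutive integers $2,\dots,2t-2$, which cover every residue modulo $t$ as soon as $t\ge 3$. Every edge of $H$ now joins a $0$ to a nonzero color, so the coloring is proper, and $\sum_{x\in X}c(x)=0\equiv\sum_{y\in Y}c(y)\pmod t$, so by the Main claim $H$ has an improper twin $t$-edge coloring. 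Assembling these colorings over all components of $G$ then finishes part~(ii).

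I expect the main obstacle to be the sufficiency direction of the Main claim, namely identifying the induced color $c_s(v)$ at the defective vertex. This works out cleanly precisely because $v$ lies on only one side of the bipartition, so summing the induced colors over the \emph{other} side still recovers $\sum_{e}s(e)$ exactly and thereby pins down the value at $v$. The only configuration needing a separate (and trivial) remark is $|Y|=1$, i.e.\ $H$ a star, which is handled by interchanging the two sides; the existence of a spanning tree, the bipartite double counting, and the elementary residue argument for $t\ge 3$ are all routine.
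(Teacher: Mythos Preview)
Your proof is correct and takes a genuinely different route from the paper's. You establish a full characterization (your Main claim) of which proper $t$-colorings of a connected bipartite graph are inducible by an improper twin $t$-edge coloring, namely exactly those with $\sum_{X}c\equiv\sum_{Y}c\pmod t$; this is the bipartite analogue of the paper's Proposition~\ref{equal}, which the paper never states. You then deduce both (i) and (ii) from this single criterion, constructing for~(ii) a bespoke proper coloring on each component (all of $X$ colored $0$, $Y$ colored with nonzero residues summing to~$0$) that satisfies the congruence. The paper, by contrast, works directly: it fixes the natural $0/1$ bipartite coloring, applies Lemma~\ref{lem:tree} with a root $v$ of degree at least~$2$ in the spanning tree, and if the induced color at $v$ comes out wrong it perturbs two tree-edges at $v$ by a fixed shift ($+2$ for $t\neq 4$, $+3$ for $t=4$) to repair the defect while keeping the resulting vertex coloring proper. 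Your approach buys a clean structural statement that unifies (i) and~(ii) and parallels the non-bipartite theory; the paper's approach is shorter for~(ii) and avoids the auxiliary residue construction, at the cost of an ad~hoc case split on $t=4$.

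One cosmetic point: your closing paragraph mentions the case $|Y|=1$ as ``needing a separate remark'', but you already handled it two paragraphs earlier by assuming $|Y|\ge 2$ after possibly swapping the names of the sides; the remark is redundant rather than wrong.
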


\begin{proof}
Without loss of generality, suppose that $G$ is connected with at least 3 vertices. 
For a vertex $v\in X\cup Y$ consider a spanning tree $T$ of $G$ and direct the edges from the root of the leaves using $v$ as the root. Set $c(x)=1$ for all vertices $x$ in the same part of the bipartition as $v$  and $c(x)=0$ for all vertices in the other part. Using Lemma~\ref{lem:tree}, we consider an improper twin $t$-edge coloring $s$ of $G$ associated with $T$ and $v$.

$i)$ Consider $t=2$. Without loss of generality we assume 
 that $|X|$ is even and then, we chose $v\in X$.  We then have:

\begin{equation}\label{eq:bip}
	0\equiv \sum_{w\in X}c(w) \equiv \sum_{u\in Y}c(u) \pmod 2,
\end{equation}


 where the first summation is the sum of an even number of 1's and  the second summation is the sum of 0's. 
 
	On the other hand, by construction of $c$ we have:
 
 \begin{equation}\label{eq:bip2}
	\sum_{u\in Y}c(u)\equiv \sum_{e\in E(T)}s(e) \equiv  \left(\sum_{u\in X\setminus\{v\}}c(u)+ \sum_{e\in E_v}s(e)\right) \pmod 2.
\end{equation}

Using Relations~\ref{eq:bip} and~\ref{eq:bip2} we deduce $c(v) \equiv \sum_{e\in E_v}s(e) \pmod 2=c_s(v)$. So, $c=c_s$ and $s$ is an improper twin 2-edge coloring of $G$.

To prove the other direction, we proceed by contradiction: assume that $|X|$ and $|Y|$ are both odd and we have an improper twin 2-edge coloring of $G$. Since $G$ is connected, $(X,Y)$ is the unique bipartition of $G$ which should be also induced by the improper twin 2-edge coloring. By double-counting the color of all edges as previously, we obtain a contradiction to the fact that both $|X|$ and $|Y|$ are odd.\\

$ii)$  We assume now that $v$ is of degree at least~2 in $G$ ($G$ is nice) and consider $T$ such that $v$ is of degree at least~2 in $T$.

Let $a_1, a_2, \ldots a_r$
  be the colors (among $0,1, \ldots ,t-1$) of the edges of $T$   incident to $v$ ($r\geq 2$).  If $a_1+ a_2+ \ldots + a_r \not\equiv 0 \pmod t $, then $s$  is an improper twin $t$-edge coloring of $T$ that almost induces $c$.   Now, if $a_1+ a_2+ \ldots + a_r \equiv 0 \pmod t$, then we consider two cases. If $t\neq 4$, then change the color of edge $e_1=vx_1$ from $a_1$ to $a_1+2 \mod t$ and the color of edge $e_2=vx_2$ from $a_2$ to $a_2+2 \mod t$ for some $x_1,x_2 \in N_T(v)$ to obtain an improper twin $t$-edge coloring $s'$.  $c_{s'}(x_1)=c_{s'}(x_2)=2$ and $c_{s'}(v)=4 \mod t$ (note that $4 \mod t \notin\{0,2\}$). 
  If $t=4$, then change the color of edge $e_1=vx_1$ from $a_1$ to $a_1+3 \mod 4$ and change the color of edge $e_2=vx_2$ from $a_2$ to $a_2+3 \mod 4$ to obtain an improper twin $t$-edge coloring $s'$. We have  $c_{s'}(x_1)=c_{s'}(x_2)=3$ and $c_{s'}(v)=2$. 
\end{proof}

Note that in the case $ii)$, the induced $t$-vertex coloring has at most 3 vertices with a color greater than~1. 
The following is a key lemma for the establishment of the monotonicity property and the bounds for non-bipartite case in a constructive way.

\begin{lem}\label{odd-cycle}
Let $G=(V,E)$ be a connected non-bipartite graph. Let $c$ be a $k$-vertex coloring of $G$ almost induced by an improper twin $k$-edge coloring $s$ with defective vertex $v$. For any $x\in \mathbb{Z}_k$ such that   $\forall 
u\in N_G(v), c(u)\neq c_s(v)+2x \mod k$, the $k$-vertex coloring obtained by replacing $c(v)$ with $c_s(v)+2x \mod k$ is induced by an improper twin $k$-edge coloring.
  \end{lem}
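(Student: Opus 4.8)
The plan is to transform $s$ by shifting edge colors alternately along an odd closed walk through the defective vertex $v$. Since $G$ is connected and non-bipartite, it contains an odd cycle $C$; following a path $Q$ from $v$ to a vertex of $C$ (possibly trivial if $v\in C$), then $C$, then $Q$ traversed backwards produces a closed walk $W=v_0e_1v_1e_2\cdots e_\ell v_\ell$ with $v_0=v_\ell=v$ and $\ell=2m+1$ odd. I would then define $s':E(G)\to\mathbb{Z}_k$ by, informally, adding $+x$ to $e_1$, $-x$ to $e_2$, $+x$ to $e_3,\dots$, accumulating the shifts whenever an edge of $W$ is traversed more than once; formally, for every $e\in E(G)$ set $s'(e)=s(e)+x\bigl(|\{i\le\ell:\ i\ \text{odd},\ e_i=e\}|-|\{i\le\ell:\ i\ \text{even},\ e_i=e\}|\bigr)$ in $\mathbb{Z}_k$, so that $s'(e)=s(e)$ for every edge not on $W$.

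Next I would compute the induced coloring $c_{s'}$. For any vertex $u$ one has $c_{s'}(u)-c_s(u)=x\sum_{i=1}^{\ell}(-1)^{i+1}\mathbf{1}[e_i\in E_u]$, and $e_i\in E_u$ holds exactly when $u$ is one of $v_{i-1},v_i$ (these are distinct since $G$ is simple). Grouping the active terms according to the occurrences of $u$ along $W$: an occurrence of $u$ at an interior position $j\in\{1,\dots,\ell-1\}$ is hit only by the two consecutive walk-edges $e_j$ and $e_{j+1}$, whose signs $(-1)^{j+1}$ and $(-1)^{j+2}$ are opposite, so such an occurrence contributes $0$; the occurrence at position $0$ is hit only by $e_1$, with sign $(-1)^{2}=+1$, and the occurrence at position $\ell$ is hit only by $e_\ell$, with sign $(-1)^{\ell+1}=(-1)^{2m+2}=+1$. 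Since $v$ is the only vertex occurring at positions $0$ and $\ell$, we get $c_{s'}(u)=c_s(u)$ for every $u\neq v$, while $c_{s'}(v)=c_s(v)+2x$ in $\mathbb{Z}_k$.

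Finally, let $c'$ be the coloring equal to $c$ on $V\setminus\{v\}$ with $c'(v)=c_s(v)+2x\bmod k$. Because $s$ almost induces $c$ with defective vertex $v$, we have $c_s(u)=c(u)=c'(u)$ for all $u\neq v$, so the computation above yields $c_{s'}=c'$. The coloring $c'$ is proper on $V\setminus\{v\}$ since $c$ is, and the hypothesis that $c(u)\neq c_s(v)+2x\bmod k$ for every $u\in N_G(v)$ says precisely that $v$'s new color clashes with none of its neighbours; hence $c'$ is a proper $k$-vertex coloring, and $s'$ is an improper twin $k$-edge coloring inducing it, which is the claim.

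The only real subtlety is the bookkeeping for closed walks that repeat vertices or edges: the telescoping cancellation must be organised by position along $W$ — pairs of consecutive walk-edges at an interior occurrence of a vertex — rather than by edge, and the accumulated shift on a multiply-traversed edge has to be tracked carefully in the definition of $s'$. Once this is set up, the parity of $\ell$ is exactly what forces the two endpoint contributions at $v$ to be $+x$ each rather than $+x$ and $-x$, which is where the connectedness and non-bipartiteness of $G$ are used; everything else is a routine verification.
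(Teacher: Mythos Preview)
Your proof is correct and follows essentially the same approach as the paper: build an odd closed walk through $v$ by going out along a path $P$ to an odd cycle $C$, around $C$, and back along $P$, then alternately add $+x,-x$ to the edge colors along the walk. The paper's argument is more informal (it simply asserts that non-defective vertices keep their color because the alternating shifts cancel), whereas you give the explicit bookkeeping for multiply-traversed edges and the position-by-position telescoping; this extra care is exactly what is needed to make the paper's sketch rigorous, but the underlying idea is identical.
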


\begin{proof}
Since $\chi(G)>2$, the graph $G$ contains an odd cycle, say $C$. Since $G$ is connected, there exists a path between $v$ and $C$, say $P$. We construct an odd closed walk $W$ passing by $v$ as follows: traverse $P$ starting at $v$ and go to $C$ and then traverse $C$ and come back to $v$ via $P$. Note that each edge of $P$ is traversed twice.

Given $x\in \mathbb{Z}_k$, add alternately $+x, -x \mod k$ to $s(e)$,  for each $e$ in $W$ starting from $v$ and consider the vertex coloring induced by this new edge coloring $s'$. All vertices $u\in W, u\neq v$  have the same color since the summation of the colors of edges incident to $u$ is the same ($s(u)=s'(u)$). Vertices outside $W$ are not affected at all and the color $c(v)$ is changed into $c_s(v)+2x \mod k = c_{s'}(v)$. This new vertex coloring is a  proper vertex coloring since the color $c(v)+2x \mod k$ is not used in $N_G(v)$ and it is induced by $s'$. \end{proof}

From now on, we deal with the 
non-bipartite case. We first settle the bounds and the monotonicity for all improper twin edge colorings with odd number of colors. 
Lemmas~\ref{lem:tree} and~\ref{odd-cycle} give an alternative proof of the following result.

\begin{prop}\label{oddc}\cite{JCMCC12}  If $G$ is a nice graph without bipartite component and $t$ is an odd positive integer such that $t\geq \chi(G) \geq 3$, then 
any $t$-vertex coloring  of $G$ with  colors in $\{0,1, \ldots,t-1\}$ can be induced by an improper twin $t$-edge coloring of $G$.
\end{prop}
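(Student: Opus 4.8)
The plan is to reduce everything to the two lemmas just established. First I would reduce to the connected case: since a $t$-vertex coloring of $G$ restricts to a $t$-vertex coloring of each component, and each component is nice and non-bipartite with chromatic number at least $3$, it suffices to produce on each component an improper twin $t$-edge coloring inducing the prescribed coloring; these can then be taken together since colors on edges of distinct components do not interact. So assume $G$ is connected, non-bipartite, $t$ odd, $t\geq\chi(G)\geq 3$, and fix an arbitrary $t$-vertex coloring $c$ of $G$ with colors in $\{0,\dots,t-1\}$.

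Next I would pick any vertex $v\in V(G)$, a spanning tree $T$ of $G$, and apply Lemma~\ref{lem:tree} to obtain an improper twin $t$-edge coloring $s$ that almost induces $c$ with $v$ as the defective vertex, i.e.\ $c_s(u)=c(u)$ for every $u\neq v$ (and $s$ vanishes off $T$, though I don't expect to need that refinement here). If $c_s(v)=c(v)$ we are already done, so assume $c_s(v)\neq c(v)$. The goal is now to correct the single bad value at $v$ using Lemma~\ref{odd-cycle}: I need to find $x\in\mathbb{Z}_t$ with $c_s(v)+2x\equiv c(v)\pmod t$ such that $c_s(v)+2x\mod t$ is not the color of any neighbor of $v$. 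The second condition is automatic: $c_s(v)+2x\equiv c(v)\pmod t$ means we are trying to set $v$'s color to $c(v)$, and since $c$ is a proper vertex coloring, $c(v)$ differs from $c(u)=c_s(u)$ for all $u\in N_G(v)$. So the only real issue is solvability of $2x\equiv c(v)-c_s(v)\pmod t$.

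Here is where oddness of $t$ enters, and it is the one genuinely load-bearing (but easy) step: since $t$ is odd, $2$ is a unit in $\mathbb{Z}_t$, so $2x\equiv c(v)-c_s(v)\pmod t$ has a (unique) solution $x\equiv 2^{-1}(c(v)-c_s(v))\pmod t$. With this $x$, the hypotheses of Lemma~\ref{odd-cycle} are met, and the lemma produces an improper twin $t$-edge coloring inducing the coloring obtained from $c$ by replacing $c(v)$ with $c_s(v)+2x\equiv c(v)$, which is exactly $c$. This finishes the connected case, hence the general case.

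I do not anticipate a real obstacle: the argument is a direct composition of Lemma~\ref{lem:tree} (get a coloring correct everywhere but at $v$) and Lemma~\ref{odd-cycle} (fix the value at $v$ by pushing $\pm x$ around an odd closed walk), with invertibility of $2$ modulo an odd $t$ being the only arithmetic input. The mild point to be careful about is the degenerate possibility $c_s(v)=c(v)$, which is handled trivially, and making sure the reduction to connected components is stated cleanly given that $\chi(G)\geq 3$ forces every component to be non-bipartite and of order at least $3$, so Lemma~\ref{odd-cycle} applies componentwise.
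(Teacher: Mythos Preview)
Your proof is correct and follows essentially the same approach as the paper's: reduce to the connected case, apply Lemma~\ref{lem:tree} to get an edge coloring that almost induces the target coloring with one defective vertex, then use the invertibility of $2$ in $\mathbb{Z}_t$ (since $t$ is odd) together with Lemma~\ref{odd-cycle} to correct the defective vertex. You are in fact slightly more careful than the paper in explicitly verifying the neighbor-avoidance hypothesis of Lemma~\ref{odd-cycle}, though one minor phrasing point: it is the explicit ``no bipartite component'' hypothesis, not $\chi(G)\geq 3$ alone, that ensures each component is non-bipartite.
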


\begin{proof}
We assume without loss of generality that $G$ is connected since otherwise the same proof applies to every connected component (note that if the hypotheses hold for $G$ they hold for each component). Let $f$ be a $t$-vertex coloring of $G$ with  colors in $\{0,1, \ldots,t-1\}$ with $t\geq\chi(G)$. We show that there is an improper twin $t$-edge coloring $s$ of $G$ satisfying $c_s(v_i)=f(v_i)$ for all vertices $v_i \in V(G), i=1,\ldots, n$.


Consider a spanning tree $T$ of $G$ and direct the edges from the root of the leaves using $v_1$ as the root. 
Using Lemma~\ref{lem:tree}, we consider an improper $t$-edge coloring $s$ of $G$ associated with $T$ and $v$. It satisfies 
  $\forall i\geq 2, c_s(v_i)=f(v_i)$. 
  Let $a_1, a_2, \ldots ,a_r$ be the colors (among $\{0, \ldots ,t-1\}$) of the edges in $E(T)$ incident to $v_1$.  We have $c_s(v_1)= a_1+ a_2+ \ldots +a_r \mod t$. If $c_s(v_1) = f(v_1)$, 
then $f$ is induced by the improper twin $t$-edge coloring  $s$. Otherwise, since  $k$ is odd ($\gcd(k,2)=1$), the equation $c_s(v_1)+2x=f(v_1) \mod k$ has a solution $x\in \mathbb{Z}_k$.  Lemma~\ref{odd-cycle} ensures that we can modify $s$ into $s'$ such that $f=c_{s'}$ (replacing $f(v_1)$ by $c_s(v_1)+2x$ does not change the vertex coloring $f$). It completes the proof.
 \end{proof}

\begin{rmk}
	Note that the non-bipartite condition in Proposition~\ref{oddc} is required. In particular we cannot improve Theorem~\ref{bip} ii) in the sense that, for bipartite components, only some  $t$-vertex colorings are induced by an improper twin $t$-edge twin coloring. 
\end{rmk}
Indeed, by Theorem~\ref{bip} i) a 2-vertex coloring of the star graph $K_{1,3}$ cannot be induced by an improper twin 2-edge coloring. Moreover, for any even $p\geq 4$, the $p$-vertex coloring of the star graph $K_{1,p-1}$ assigning color 0 to the center and different colors from 1 to $p-1$ for the other vertices is not induced by an improper twin $p$-edge coloring since $\frac{p(p-1)}{2}\neq 0 \mod p$. For any odd $p\geq 3$, the $p$-vertex coloring of the star graph $K_{1,p-1}$ assigning color 1 to the center and different colors in $\{0, 2, \ldots ,p-1\}$  for the other vertices is not induced by an improper twin $p$-edge coloring since $\left(\frac{p(p-1)}{2}-1\right)\neq 1 \mod p$.



The following proposition will enable us to obtain the monotonicity of the improper twin edge coloring for any number of colors (Theorem \ref{thm:at least k}) and the bounds on $\chi_{it}'(G)$ for graphs with even chromatic number at least 4.

\begin{prop}\label{equal}
Let $G$ be a nice non-bipartite connected graph and $t$ an even number such that $t\geq \chi(G)\geq 3$. Then, a $t$-vertex coloring $f$ with colors in $\{0, \ldots, t-1\}$ is induced by an improper twin $t$-edge coloring 
if and only if 
$\sum_{v\in V(G)}f(v)$ is even.
\end{prop}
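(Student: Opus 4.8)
The plan is to prove the two implications separately; necessity is a one-line parity count, and sufficiency combines Lemmas~\ref{lem:tree} and~\ref{odd-cycle}, the hypothesis ``$t$ even'' being exactly what links the two.

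For necessity, I would suppose $f=c_s$ for some improper twin $t$-edge coloring $s$, viewing each $s(e)$ as an integer in $\{0,\dots,t-1\}$. Summing $\sum_{e\in E_v}s(e)$ over all $v$ counts every edge twice, so $\sum_{v}\big(\sum_{e\in E_v}s(e)\big)=2\sum_{e\in E(G)}s(e)$ is even. Reducing each inner sum modulo $t$ (to get $c_s(v)$) changes it by a multiple of $t$, hence by an even amount since $t$ is even; therefore $\sum_v f(v)=\sum_v c_s(v)$ is even. I would note explicitly that this computation only uses $c_s(v)\equiv\sum_{e\in E_v}s(e)\pmod t$ and nothing about properness, so it applies to any edge coloring $s$ and its vertex values $c_s$ whether or not $c_s$ is proper --- this observation is reused below.

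For sufficiency, assume $\sum_v f(v)$ is even. Fix a spanning tree $T$ of the connected graph $G$ and a vertex $v_1$, and apply Lemma~\ref{lem:tree} with the coloring $c=f$ to obtain an improper twin $t$-edge coloring $s$ associated with $T$ and $v_1$; it satisfies $c_s(v_i)=f(v_i)$ for all $i\ge 2$, with possibly $c_s(v_1)\ne f(v_1)$. By the parity remark above $\sum_v c_s(v)$ is even, and since $\sum_v c_s(v)$ and $\sum_v f(v)$ agree on $v_2,\dots,v_n$, the difference $f(v_1)-c_s(v_1)$ is even. As $t$ is even, $\gcd(2,t)=2$ divides $f(v_1)-c_s(v_1)$, so there is $x\in\mathbb{Z}_t$ with $c_s(v_1)+2x\equiv f(v_1)\pmod t$. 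Now I invoke Lemma~\ref{odd-cycle} with $k=t$, the proper $t$-vertex coloring $c=f$ (which is almost induced by $s$ with defective vertex $v_1$), and this $x$: because $f$ is a proper coloring, no neighbour $u$ of $v_1$ has $f(u)=f(v_1)=c_s(v_1)+2x\bmod t$, so the hypothesis of Lemma~\ref{odd-cycle} is met, and the lemma produces an improper twin $t$-edge coloring inducing the coloring obtained from $f$ by replacing $f(v_1)$ with $c_s(v_1)+2x\bmod t=f(v_1)$, i.e. inducing $f$ itself.

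The only delicate point is the parity bookkeeping: one has to check that the reductions modulo $t$ in the definition of $c_s$ preserve parity, which is precisely where ``$t$ even'' is used (in both directions), and that Lemma~\ref{odd-cycle} may legitimately be fed the target coloring $f$ as its $c$, the side condition $\forall u\in N_G(v_1),\, f(u)\ne c_s(v_1)+2x$ coming for free from properness of $f$. Non-bipartiteness of $G$ is used only through Lemma~\ref{odd-cycle} (it supplies an odd cycle to reroute along), and the bound $t\ge\chi(G)$ is not actually needed for the stated equivalence.
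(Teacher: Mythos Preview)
Your proof is correct and follows essentially the same route as the paper: necessity is the same double-count modulo an even $t$, and sufficiency uses Lemma~\ref{lem:tree} to almost induce $f$ with defective vertex $v_1$ and then Lemma~\ref{odd-cycle} to repair $v_1$ once $f(v_1)-c_s(v_1)$ is known to be even. The only (minor) difference is in that parity step: the paper derives it from the explicit tree identity $\sum_{i\ge 2}f(v_i)=c_s(v_1)+2\sum_{e\in E'}s(e)$, whereas you more cleanly reapply the necessity parity argument to the (possibly improper) vertex labeling $c_s$ and subtract.
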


\begin{proof}
$(\Leftarrow)$  
Denote $n=|V(G)|$ and $V=\{v_1, \ldots, v_n\}$ and let $f$ be a $t$-vertex coloring of $G$ with colors in $\{0, \ldots, t-1\}$ such that $t$ and $\sum_{i}^{n}f(v_i)$ are even. 

As in the proof of Proposition~\ref{oddc} we consider a spanning tree $T$ of $G$ and direct the edges from the root of the leaves using $v_1$ as the root. Using Lemma~\ref{lem:tree}, we consider an improper $t$-edge coloring $s$ of $G$ associated with $T$ and $v_1$. It satisfies 
  $\forall i\geq 2, c_s(v_i)=f(v_i)$.

 If $c_s(v_1)=f(v_1)$, then the proof is complete. Else we claim that $f(v_1)-c_s(v_1)$ is even. Indeed, by definition of $s$ and denoting by $E'\subset E$ the set of edges in $E(T)$ not adjacent to $v_1$ we have:
 	
 	\begin{equation}\label{sum-vertices}
 		\sum_{i=2}^n f(v_i)=c_s(v_1)+2\sum_{e\in E'}s(e)
 	\end{equation}

 	Since $\sum_{v\in V(G)}f(v)$ is even, this implies $c_s(v_1)+f(v_1)$ is even and consequently $f(v_1)-c_s(v_1)$ is even as well. We set 
 	$$x= \frac{f(v_1)-c_s(v_1)}{2} \mod t,$$ so $f(v_1)=c_s(v_1)+2x \mod t$ and use Lemma~\ref{odd-cycle} to modify $s$ in $s'$ such that $f=c_{s'}$. It completes the proof of the first implication.
 	
$(\Rightarrow)$  Conversely, suppose that $\chi'_{it}(G)=\chi(G)$. Consider an improper twin edge coloring $s$ using the colors $\{0, 1, \ldots, \chi(G)-1\}$. This induces a vertex coloring $f$. We have $\sum_{v\in V(G)}f(v)\equiv 2(\sum_{e\in E(G)}s(e))  \pmod k.$ Now, since $k$ is even,
the proof is complete.
\end{proof}

Now, we are ready to show the monotonicity property of the improper twin edge chromatic index for the general case.

\begin{thm}\label{thm:at least k}
Let $G$ be a nice graph. If $G$ admits an improper twin $k$-edge coloring ($k\geq 2$), then $G$ admits an improper twin $t$-edge coloring for all $t\geq k$. 
\end{thm}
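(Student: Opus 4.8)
The plan is to reduce the statement to the connected case and then split according to the parity of $t$ and whether $G$ is bipartite, invoking the results already established. First I would observe that it suffices to prove the claim for each connected component separately: a graph admits an improper twin $t$-edge coloring if and only if each of its components does (the induced vertex coloring is proper on $G$ iff it is proper on each component, and the colorings on different components are independent). So assume $G$ is connected and nice, with an improper twin $k$-edge coloring; fix $t\geq k$.

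Next I would handle the bipartite case. If $G$ is bipartite, then by Theorem~\ref{bip}, for every $t\geq 3$ the graph $G$ admits an improper twin $t$-edge coloring, so the only remaining value to check is $t=2$; but if $t=2$ then $k=2$ as well (since $t\geq k\geq 2$), and we are given that $G$ has an improper twin $2$-edge coloring, so there is nothing to prove. Hence assume $G$ is non-bipartite, so $\chi(G)\geq 3$. Since $G$ has an improper twin $k$-edge coloring, we have $k\geq \chi(G)\geq 3$, and in particular $t\geq 3$.

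Now fix any proper $t$-vertex coloring $f$ of $G$ with colors in $\{0,\dots,t-1\}$, which exists since $t\geq \chi(G)$. If $t$ is odd, Proposition~\ref{oddc} immediately gives an improper twin $t$-edge coloring of $G$ inducing $f$, and we are done. If $t$ is even, Proposition~\ref{equal} tells us we can realize $f$ precisely when $\sum_{v\in V(G)}f(v)$ is even; so the task reduces to producing \emph{some} proper $t$-vertex coloring whose color sum is even. The main obstacle — really the only substantive point — is this parity adjustment. I would argue as follows: take any proper $t$-vertex coloring $f$. If $\sum_v f(v)$ is already even, stop. Otherwise, since $G$ is connected and non-bipartite it has at least $3$ vertices, so pick a vertex $w$ with a neighbor; among the colors $\{0,\dots,t-1\}$ not appearing on $N(w)$ there are at least $t-\deg(w)$ many, and since $t\geq \chi(G)$ there is at least one available color other than $f(w)$ — but a single recoloring changes the sum parity only if the two colors involved differ in parity, which need not hold. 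The clean fix is to note that $t$ is even and $\chi(G)\leq t$, and use a vertex $w$ in a color class that we may shift: more robustly, recolor along the lines of Lemma~\ref{odd-cycle}'s philosophy isn't needed — instead, since $\chi(G)\le t$ and $t$ is even with $t\ge 4$ (as $t\ge 3$ and even), observe that from a proper $\chi(G)$-coloring we have $\chi(G)$ nonempty classes with representative colors $0,1,\dots,\chi(G)-1$, and if the total sum is odd we may change one vertex currently colored $c$ to color $c+1 \bmod t$ or $c-1 \bmod t$ — at least one of $c+1,c-1$ (mod $t$) lies outside $\{f(u):u\in N(w)\}$ for a suitably chosen low-degree $w$, because $\chi(G)\le t$ forces, for any vertex, at least one color of each residue parity to be free when $t\ge 4$; this flips the parity of the sum, yielding the desired even-sum coloring. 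Applying Proposition~\ref{equal} to this coloring finishes the even case, and hence the theorem.

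I expect the write-up to be short once the even-$t$ parity argument is pinned down; the potential pitfall is ensuring a legal single recoloring that flips the sum parity exists, which I would settle by choosing $w$ to be a vertex whose closed neighborhood misses at least two colors (guaranteed since $t\ge \chi(G)$ and, if $t>\chi(G)$, trivially; if $t=\chi(G)$, use that $t$ even means $t\ge 4$ and pick $w$ outside a largest color class) and checking that among the missing colors one has parity opposite to $f(w)$.
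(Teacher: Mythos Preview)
Your overall skeleton matches the paper exactly: reduce to connected components, dispose of the bipartite case via Theorem~\ref{bip}, then for non-bipartite $G$ split on the parity of $t$ and invoke Proposition~\ref{oddc} (odd $t$) or Proposition~\ref{equal} (even $t$). The only substantive step is the parity adjustment for even $t$, and that is precisely where your write-up has a gap.

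You never separate the case $t=k$ (which is vacuous, since an improper twin $k$-edge coloring is given) from $t>k$. Because of this you end up trying to perform a parity-flipping recoloring even when $t=\chi(G)$ might hold, and your proposed fix---pick $w$ whose closed neighborhood misses at least two colors and find among the missing colors one of parity opposite to $f(w)$---is not justified. Missing two colors does not force one of them to have the opposite parity (e.g., the missing colors could both be even), and the heuristics you suggest (``pick $w$ outside a largest color class'', ``at least one color of each residue parity is free when $t\geq 4$'') are not proved and are not true in general for an arbitrary $t$-coloring. The paper avoids all of this with a one-line observation: once you know $t>k$, start from a proper coloring $f:V(G)\to\{0,\ldots,k-1\}$; if $\sum_v f(v)$ is odd, let $\ell$ be the largest color actually used and recolor a single vertex of color $\ell$ to color $\ell+1$. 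Since $\ell+1\le k\le t-1$ and no vertex currently has color $\ell+1$, this is still a proper $t$-coloring, and the total changes by exactly~$1$, flipping the parity. Then Proposition~\ref{equal} applies. Inserting this trick (and explicitly noting $t=k$ is trivial) closes the gap and makes your proof coincide with the paper's.
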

\begin{proof}
We assume without loss of generality that $G$ is connected since otherwise the same proof applies to every connected component. The assertion vacuously holds for $t=k$ for all graphs. If $G$ is bipartite, the assertion holds by Theorem~\ref{bip}. Assume now $G$ is not bipartite and $t>k\geq 3$.
If $t$ is odd, then the assertion holds by Proposition~\ref{oddc}.  
Assume $t$ is even. Obviously, $\chi(G)\leq k$. Consider a proper vertex coloring of $G$, say $f:V(G) \rightarrow \{0, \ldots, k-1\}$. If the sum $\sum_{v\in V(G)}f(v)$ is even then Proposition~\ref{equal} allows to conclude. 
 Otherwise, consider the maximum color used in $f$, say $\ell$ and change the color of exactly one vertex with color $\ell$ to $\ell+1$ and call $f'$ the resultant proper coloring. 
  Clearly, $\sum_{v\in V(G)}f'(v)$ is even and apply Proposition~\ref{equal} to conclude the proof.
  \end{proof}


Let us finally note that  Lemma~\ref{odd-cycle} and its consequence Proposition~\ref{equal} give a comprehensive and direct proof of Theorem~\ref{thm:mainfromJCMCC} in the case where $\chi(G)$ is even of size at least~4, as stated below. The other cases are immediate consequences of Theorem~\ref{bip} and Proposition~\ref{oddc} for which we gave alternative simple proves as well. Moreover, as stated in Theorem~\ref{thm:final} the arguments can be turned into a linear-time algorithm.

\begin{proof}[Alternative Proof of Theorem~\ref{thm:mainfromJCMCC}] \mbox{}\\
If $G$ is bipartite, the statement holds by Theorem~\ref{bip}. For a non-bipartite connected graphs $G$ of order at least~3, if $\chi(G)$ is odd then the statement holds by 
Proposition~\ref{oddc}. 
 To complete the proof of Theorem~\ref{thm:mainfromJCMCC}, it remains to show that the statement holds for $\chi(G)$ even. Let us show this under two cases.

\textbf{Case 1: $\chi(G)\equiv 0 \pmod 4$}. Then, for some positive integer $r$, there are  $2r$ colors 
with an even label and the same holds for 
 colors of odd labels. Consider a  $\chi(G)$-vertex coloring of $G$, say $f$. If all color classes have odd size, then $\sum_{v\in G}f(v)$ is even and by Proposition~\ref{equal} ($G$ is not bipartite) we are done. Similarly, if all classes have even size we are done as well. Thus  assume that
we have $\chi(G)$ color classes $C_0, \ldots, C_{\chi(G)-1}$ in which, without loss of generality, $|C_0|$ and $|C_1|$ have different parity.
Now, color $C_i, i=2, \ldots, \chi(G)-1$, arbitrarily using the colors $2, 3, \ldots, \chi(G)-1$.  We have two possibilities
for coloring $C_0$ and $C_1$ by two colors $0$ and $1$, and obviously for one of them, the total sum of colors of vertices are even and Proposition~\ref{equal} completes the proof.


\textbf{Case 2: $\chi(G)\equiv 2 \pmod 4$}. Then there is an odd number of odd colors and the same odd number of even colors in a $\chi(G)$-vertex coloring of $G$. We will show that if there is a color class of even size in some $\chi(G)$-vertex coloring of $G$ then $\chi'_{it}(G)=\chi(G)$, and otherwise, that is if all color classes in all $\chi(G)$-vertex colorings of $G$ are of odd size, then $\chi'_{it}(G)=\chi(G)+1$.\\
Consider a  $\chi(G)$-vertex coloring of $G$, say $f$. If all color classes have even size, then obviously $\sum_{v\in G}f(v)$
is even and by Proposition~\ref{equal} we are done. If $f$ has two color classes, say without loss of generality $C_1$ and $C_2$ whose cardinalities are of different parity, then Proposition~\ref{equal} allows to conclude exactly as in part (i).


Now, assume that all color classes in all $\chi(G)$-vertex coloring of $G$ are of odd sizes. Then $\sum_{v\in V(G)} f(v)$ is odd for all $\chi(G)$-vertex colorings $f$ of $G$ and by Proposition~\ref{equal}, $\chi'_{it}(G)>\chi(G)$. Now, $\chi(G)+1$ is an odd number
  and by the method used in the proof of Proposition~\ref{oddc}, one can obtain an improper twin edge coloring of $G$ with $\chi(G)+1$ colors, implying that $\chi'_{it}(G)=\chi(G)+1$.
\end{proof}

Finally, the following theorem shows that the  arguments can be turned into a linear-time algorithm which constructs an improper twin $k$ or $(k+1)$-edge coloring whenever a $k$-vertex coloring is given.


\begin{thm}\label{thm:final}
There is a $O(|V|+|E|)$ algorithm that computes, for 
a nice graph $G=(V,E)$ and a $k$-vertex coloring  of $G$, $k\leq n$:\\
i) if $k\equiv 2 \pmod 4$ and there is at least one connected component with an odd number of vertices of each color, an improper twin $(k+1)$-edge coloring;\\
ii) else an improper twin $k$-edge coloring.
\end{thm}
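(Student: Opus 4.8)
The plan is to make effective the constructions underlying Theorem~\ref{bip}, Proposition~\ref{oddc}, Proposition~\ref{equal} and the alternative proof of Theorem~\ref{thm:mainfromJCMCC}, and to check that every primitive they use runs in time linear in the component it acts on. Since the connected components of $G$ are obtained in $O(|V|+|E|)$ and treated independently, it suffices to output, for each connected component $H$ on at least $3$ vertices, an improper twin edge coloring of $H$ in time $O(|V(H)|+|E(H)|)$. The algorithm first decides, in $O(|V|)$ time, which case applies: it knows $k$, and it checks whether $k\equiv 2\pmod 4$ and some component has, under the given coloring, all its nonempty color classes of odd size. It uses only $H$, the restriction $c|_H$ of the given $k$-coloring, and this flag; in particular it never computes $\chi(H)$ nor solves the hard problem of Section~\ref{sec:complexity}. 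The hypothesis $k\le n$ guarantees that color labels fit in $O(1)$ machine words, so arithmetic modulo $k$ costs $O(1)$.

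The common toolkit, each item costing $O(|V(H)|+|E(H)|)$, is: (a) a single BFS producing a spanning tree $T$, a bipartiteness test, the canonical $2$-coloring when $H$ is bipartite, and, when $H$ is not bipartite, a non-tree edge joining two vertices of equal BFS level (such an edge exists precisely because $H$ is not bipartite), whence an odd closed walk through any prescribed vertex obtained by concatenating two tree paths with that edge; (b) one bottom-up sweep of $T$ implementing Lemma~\ref{lem:tree}: for a target vertex coloring and a chosen root, it returns an improper edge coloring that almost induces it with that root as defective vertex, all non-tree edges receiving color $0$; (c) the operation of Lemma~\ref{odd-cycle}, which adds $+x,-x\mod k$ alternately along a walk, in time proportional to the walk length (the walk from (a) has length $O(|V(H)|)$, and although it may repeat vertices and edges the alternating signs still cancel at every internal visit, leaving net change $2x$ only at the root, exactly as in the proof of Lemma~\ref{odd-cycle}); and (d) the constant-time local repair at the root from the proof of Theorem~\ref{bip}~ii). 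With these, a bipartite component $H$ is handled as follows: if $k\geq 3$, root $T$ at a vertex of degree at least $2$ (one exists because $H$ is nice and connected, and a BFS from it makes all of its neighbours its children in $T$), run (b) for the $0/1$ coloring with that root defective, then apply (d); if $k=2$ and we are not in the special case, root $T$ at a vertex lying in a part of the bipartition of even cardinality (such a part exists, otherwise $H$ would force the special case) and the double-counting argument of Theorem~\ref{bip}~i) shows the root is automatically non-defective.

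For a non-bipartite component $H$ outside the special case we have $k\geq\chi(H)\geq 3$. If $k$ is odd, I would follow Proposition~\ref{oddc}: build $s$ via (b) for $c|_H$ with root $v_1$, solve $2x\equiv c(v_1)-c_s(v_1)\pmod k$ (solvable since $\gcd(k,2)=1$), and apply (c) along an odd closed walk through $v_1$. If $k$ is even, I would first transform $c|_H$, in $O(|V(H)|)$ time, into a proper $k$-coloring $c'$ of $H$ with $\sum_{v\in V(H)}c'(v)$ even, by a relabelling of a pair of color classes of different parity and/or a recoloring of a single vertex with a color missing at its neighbours, exactly as in Cases~1 and~2 of the alternative proof of Theorem~\ref{thm:mainfromJCMCC}; the hypothesis of the non-special case is precisely what ensures such a transformation exists. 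Then I would apply the construction of Proposition~\ref{equal}: (b) for $c'$ with root $v_1$, set $x=(c'(v_1)-c_s(v_1))/2\bmod k$, and (c) along an odd closed walk through $v_1$, obtaining an improper twin $k$-edge coloring of $H$. In the special case the number of colors is $k+1$, which is odd, so I would handle every non-bipartite component by the odd-$t$ branch above with $t=k+1$ (Proposition~\ref{oddc}) and every bipartite component by the $t=k+1$ instance of Theorem~\ref{bip}~ii), obtaining an improper twin $(k+1)$-edge coloring of $G$.

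The step I expect to be the main obstacle is keeping everything strictly within $O(|V|+|E|)$. Two points need care. First, the parity repair for non-bipartite even-$k$ components must be realised by the local relabelling/recoloring moves above and not by any search over colorings, and one must verify, using the case analysis behind Theorem~\ref{thm:mainfromJCMCC}, that under the non-special hypothesis such a move always exists; this is the crux of the correctness argument. Second, the modification of Lemma~\ref{odd-cycle} is applied along a walk that traverses the connecting path twice, so one must confirm, as noted in (c), that the alternating signs still cancel at every internal visit, which keeps the running time linear in the (linear-sized) walk and the output valid. Everything else reduces to a bounded number of BFS/DFS passes and single sweeps over a spanning tree, giving total time $O(|V|+|E|)$.
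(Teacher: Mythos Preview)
Your proposal is correct and follows essentially the same route as the paper: work component by component, apply Lemma~\ref{lem:tree} on a spanning tree, then repair the defective root either via Lemma~\ref{odd-cycle} (non-bipartite case) or via the local fix from the proof of Theorem~\ref{bip}~ii) (bipartite case), after first permuting color labels in the even-$k$ case, exactly as in the alternative proof of Theorem~\ref{thm:mainfromJCMCC}, so that Proposition~\ref{equal} applies. You are, if anything, more careful than the paper on two implementation points---the explicit BFS-level construction of the odd closed walk and the $O(1)$ word-RAM cost of arithmetic in $\mathbb{Z}_k$ (the paper loosely writes $O(k)$ per operation).
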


\begin{proof}
Given the graph $G=(V,E)$ the connected components can be determined in $O(|V|+|E|)$; we can assume $G$ is connected and apply, for a non connected graph, the same algorithm on each connected component. There is a $O(|V|+|E|)$ algorithm deciding whether $G$ is bipartite and, in this case, computing its bipartition.
 \noindent Note first that operations in $\mathbb{Z}_k$ can be performed in $O(k)$ time and $k\leq |V|$. In particular, if $k$ is odd, determining $x$ such that $2x+y=z$ requires for instance the Euclidean algorithm~\cite{discrete-notes}.\\
  
 \noindent
 Lemma~\ref{lem:tree} (including the determination of the spanning tree $T$) gives a $O(|V|+|E|)$ algorithm to compute, for a $t$-vertex coloring $c$ of $G$ and $v\in V$ an improper $t$-edge coloring that almost induces $c$ with $v$ as defective vertex.
 Indeed, it requires ( 1) computing a spanning tree $T$, (2) assigning the color 0 to edges outside the trees and (3) computing the colors of the tree-edges proceeding from the leaves to the root. All these operations can be respectively performed in $O(|V|+|E|), O(|E|)$ and $O(|V|+|E|)$.\\

 \noindent
 Suppose that $G$ is bipartite.\\
 If $k=2$ and one part of $G$ is of even size or if $k\geq 3$, we use the method described  in the proof of Theorem~\ref{bip} to compute an improper twin $k$-edge coloring inducing $c$. It requires an application of Lemma~\ref{lem:tree} with the right choice of the root $v$ (either of degree 2 if $k\geq 3$ or in an even part of the bipartition if $k=2$). This choice requires $O(|V|)$ operations.
 If $k=2$ and both parts of the bipartition are odd, then we apply the previous method replacing $k$ by 3.\\
 
 \noindent
Let us now suppose 	$G=(V,E)$ is not bipartite (of course $k\geq 3$).\\
\noindent  
 Lemma~\ref{odd-cycle} gives a $O(|V|+|E|)$ strategy to modify a given $k$-edge coloring accordingly.  It requires determining an odd-cycle $C$, a path $P$ from $v$ to $C$ and performing modifications along the related closed walk $W$. These operations require respectively $O(|V|+|E|)$,  $O(|V|+|E|)$ and $O(|E|)$ operations.\\

\noindent
If $k$ is odd, then we revisit the proof of Proposition~\ref{oddc} to compute an  improper twin $k$-edge coloring. It requires one application of Lemma~\ref{lem:tree} and eventually one equation in $\mathbb{Z}_k$ and one application of Lemma~\ref{odd-cycle}.

\noindent
If  $k\equiv 0 \pmod 4$ or $k\equiv 2 \pmod 4$, $k\geq 4$, then   we use similar arguments as in the proof of Theorem~\ref{thm:mainfromJCMCC}, replacing $\chi(G)$ with $k$. If $k\equiv 0 \pmod 4$ or if $k\equiv 2 \pmod 4$ and at least one color class of even size, then determining an improper twin $k$-edge coloring requires, using the method in Proposition~\ref{equal}, an application of Lemma~\ref{lem:tree}  and, eventually, one application of Lemma~\ref{odd-cycle}. Finally, if $k\equiv 2 \pmod 4$ and all color classes are of odd size, then we use the case where $k$ is odd considering the $k$-vertex coloring as a $(k+1)$-vertex coloring. 
\end{proof}

Since the chromatic number of planar graphs is bounded by 4 \cite{4CT, planar}, we trivially have the following:

\begin{cor}
Let $G$ be a nice connected planar graph, 
 then  $\chi'_{it}(G)\leq 4 $.
\end{cor}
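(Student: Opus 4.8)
The statement to prove is the corollary: for a nice connected planar graph $G$, $\chi'_{it}(G)\leq 4$.

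The plan is to combine the Four Color Theorem with the structural description of $\chi'_{it}$ from Theorem~\ref{thm:mainfromJCMCC}. The reasoning splits naturally according to the value of $\chi(G)$.

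\medskip
\noindent\textbf{Setup.} By the Four Color Theorem \cite{4CT,planar}, every planar graph $G$ satisfies $\chi(G)\leq 4$. Since $G$ is nice, $\chi'_{it}(G)$ is well-defined and Theorem~\ref{thm:mainfromJCMCC} applies. Recall from that theorem that $\chi'_{it}(G)\in\{\chi(G),\chi(G)+1\}$, with the value $\chi(G)+1$ occurring only when $\chi(G)\equiv 2\pmod 4$ and every proper $\chi(G)$-vertex coloring has all color classes of odd size.

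\medskip
\noindent\textbf{Case analysis on $\chi(G)$.} First I would observe that $\chi(G)\in\{1,2,3,4\}$ (the case $\chi(G)=1$ is excluded since $G$ is nice and connected, so $G=K_1$ is not possible; in any event $\chi(G)\geq 2$ because $G$ has an edge). If $\chi(G)\in\{1,3,4\}$, then $\chi(G)\not\equiv 2\pmod 4$, so by Theorem~\ref{thm:mainfromJCMCC} we have $\chi'_{it}(G)=\chi(G)\leq 4$ directly. The only remaining case is $\chi(G)=2$, i.e., $G$ is bipartite. Here $\chi(G)\equiv 2\pmod 4$, so a priori $\chi'_{it}(G)$ could be $2$ or $3$. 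In either subcase $\chi'_{it}(G)\leq 3\leq 4$, which finishes the argument. (One may invoke Theorem~\ref{bip} here: $\chi'_{it}(G)=2$ unless $G$ has a component with both parts of odd size, in which case it is $3$; but for the corollary the crude bound $\chi'_{it}(G)\leq\chi(G)+1=3$ suffices.)

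\medskip
\noindent\textbf{Obstacle.} There is essentially no obstacle: once the Four Color Theorem is granted and Theorem~\ref{thm:mainfromJCMCC} is in hand, the corollary is immediate. The only thing requiring a moment's care is noticing that when $\chi(G)=4$ we do \emph{not} fall into the exceptional congruence class $2\pmod 4$, so no ``$+1$'' can occur and the bound $4$ is not exceeded; and that the single problematic congruence class, $\chi(G)=2$, still yields at most $3<4$. Hence the short proof: apply the Four Color Theorem to bound $\chi(G)$, then apply Theorem~\ref{thm:mainfromJCMCC} and check the four possible values of $\chi(G)$.
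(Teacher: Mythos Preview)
Your proof is correct and follows exactly the approach the paper intends: the corollary is stated immediately after the sentence ``Since the chromatic number of planar graphs is bounded by 4 \cite{4CT,planar}, we trivially have the following,'' so the paper's argument is precisely the combination of the Four Color Theorem with Theorem~\ref{thm:mainfromJCMCC} that you spell out. One very minor quibble: $K_1$ is technically nice (the paper only forbids components of size~$2$), so your aside ``$G=K_1$ is not possible'' is not quite accurate; but this edge case is harmless for the bound and does not affect your argument.
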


\section{Hardness of the minimum improper twin edge coloring}\label{sec:complexity}

This section is devoted to some complexity results for the problem of deciding, given a graph $G$,  whether  $\chi'_{it}(G)=\chi(G)$. We restrict ourselves to the class of instances for which a clique of size $\chi(G)$ as well as a $\chi(G)$-vertex coloring are given, which allows as to guarantee the polynomial recognition of an instance, a usual assumption in the area of NP-hardness results. More formally we consider the following problem:\\

 {\sc Improper Twin  $k$-Edge Coloring} \\
Input: A graph $G$ with a clique of size $k$ and a $k$-vertex coloring of $G$.\\
Question: Is $\chi'_{it}(G)=\chi(G)$?\\

First, we notice that from the definition of {\sc Improper Twin $k$-Edge Coloring}, an instance $G$ has $\chi(G)=k$ and this is equal to the size of a largest clique of $G$. This problem is clearly in NP since, given a ``yes" answer with an improper twin  $\chi(G)$-edge coloring of $G$, it takes linear time to check whether the $\chi(G)$-vertex coloring induced by this improper edge coloring is proper. In addition, we will show in the sequel that the following problem, called {\sc All Odd Optimal $k$-Vertex Colorings} is coNP-complete. This implies by Theorem \ref{thm:final} that {\sc Improper Twin $k$-Edge Coloring} is NP-complete for $k\equiv 2\pmod 4$; indeed these are complementary problems in the sense that the answer is ``yes" for one of them if and only if the answer is ``no" for the other.\\

 {\sc All Odd Optimal $k$-Vertex Colorings}\\
Input: A graph $G$ with a clique of size $k$ and a $k$-vertex coloring of $G$.\\
Question: Are all color classes of odd size in all $\chi(G)$-vertex colorings of $G$?\\

We  need the following result for our reduction:
\begin{lem}\label{lem:gadget}
Let $\mathcal J$ be the graph in Figure \ref{fig:gadget} (a) and consider a 3-vertex coloring of $\mathcal J$ with colors F, T and R where vertices $x,y,z$ and $X$ are colored either $T$ or $F$. Then the following hold:
\begin{enumerate}
\item[(i)] In any such 3-vertex coloring of $\mathcal J$, where $X$ is colored $T$, at least one among $x,y,z$ should be colored $T$ and among $a_1, a_2, \ldots, a_5$, there are $1T, 2F$ and $2R$ colors.
\item[(ii)] Similarly, in any such 3-vertex coloring of $\mathcal J$, where $X$ is colored $F$, at least one among $x,y,z$ should be colored $F$ and among $a_1, a_2, \ldots, a_5$, there are $1F, 2T$ and $2R$ colors.
\end{enumerate}

\end{lem}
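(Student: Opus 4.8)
The plan is to verify the lemma by a direct, exhaustive analysis of the proper $3$-colorings of the (small, fixed) graph $\mathcal{J}$, using the symmetry of $\mathcal{J}$ under the transposition of the colors $T$ and $F$ to deduce part (ii) from part (i) for free.

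The first step is to extract from Figure~\ref{fig:gadget}(a) the features of $\mathcal{J}$ that drive the argument: the triangles of $\mathcal{J}$ (each of which is forced to carry all three colors $\{T,F,R\}$ in any proper $3$-coloring), and, for each internal vertex $a_i$, its neighborhood inside $\{x,y,z,X,a_1,\ldots,a_5\}$, so that the color of $a_i$ is pinned down by a bounded number of previously fixed colors. This reduces everything to checking finitely many partial assignments.

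For part (i) we fix the color of $X$ to be $T$ and argue two things. First (\emph{clause satisfaction}): coloring all of $x,y,z$ with $F$ leads, by propagating the triangle and edge constraints incident to $x,y,z$ and $X$, to a conflict --- either two adjacent vertices are forced to share a color or some vertex is forced outside $\{T,F,R\}$; hence at least one of $x,y,z$ receives color $T$. Second (\emph{color census}): conditioning on which of $x,y,z$ are colored $T$ (there are only a handful of subcases, all with $X=T$ and at least one $T$ among $x,y,z$ by the first part), we trace the forced or constrained colors of $a_1,\ldots,a_5$ and check in each subcase that the multiset of their colors is exactly $\{T,F,F,R,R\}$, i.e., one $T$, two $F$ and two $R$. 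Part (ii) is then obtained by applying to part (i) the color-swap $T\leftrightarrow F$ (with $R$ fixed), under which $\mathcal{J}$ is invariant and the hypothesis that $x,y,z,X$ are colored in $\{T,F\}$ is preserved; the swap turns ``$1T,2F,2R$'' into ``$1F,2T,2R$'' and ``$X$ colored $T$'' into ``$X$ colored $F$''.

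The only real difficulty is bookkeeping: one must make sure the case split over the colorings of $x,y,z$ (and hence over the forced colors of the $a_i$'s) is complete, and that the count $\{T,F,F,R,R\}$ is genuinely the same in every case. There is no conceptual obstacle --- the work is in tracking how the constraints propagate through $\mathcal{J}$, and in confirming that the gadget as drawn really does possess the $T\leftrightarrow F$ symmetry that underlies the reduction from part (i) to part (ii).
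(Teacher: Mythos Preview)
Your approach is correct and matches the paper's: both prove (i) by first deriving a contradiction from $f(x)=f(y)=f(z)=F$, and then count colors among $a_1,\ldots,a_5$; part (ii) follows by the $T\leftrightarrow F$ symmetry (the paper simply says ``similarly''). The one difference is that your color-census step conditions on which of $x,y,z$ are $T$, whereas the paper avoids this case split entirely: from $f(X)=T$ alone it argues that $\{f(a_4),f(a_5)\}=\{F,R\}$, and since $a_1,a_2,a_3$ form a triangle they carry $\{T,F,R\}$ regardless of $x,y,z$, giving the $1T,2F,2R$ count directly. Your exhaustive split works, but the paper's structural observation is shorter.
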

\begin{proof}
We only prove statement (i) since (ii) is obtained similarly. Let $f$ be a $3$-vertex coloring of $\mathcal J$, where $f(X)=T$ and each one of  $f(x),f(y),f(z)$ is in $\{T,F\}$. Then at least one of $x,y,z$ has color $T$, since otherwise (that is if all of $x,y,z$ are colored $F$) we necessarily have $f(a_3)=F$, implying $c(a_4)=R$. Now, $a_5$ has all three colors in its neighbourhood, hence it is impossible to color $a_5$ with one of the three colors $T, F$ and $R$. Besides, $f(X)=T$ implies that we have either $c(a_5)=F, c(a_4)=R$ or $c(a_5)=R, c(a_4)=F$. In either case, among $a_1, a_2, a_3$ there are exactly one $T$, one $F$ and one $R$. Therefore, there are in total exactly $1T, 2F$ and $2R$ colors among $a_1,a_2, \ldots, a_5$.
\end{proof}

\begin{figure}[htbp]
\begin{center}
\includegraphics[width=11cm]{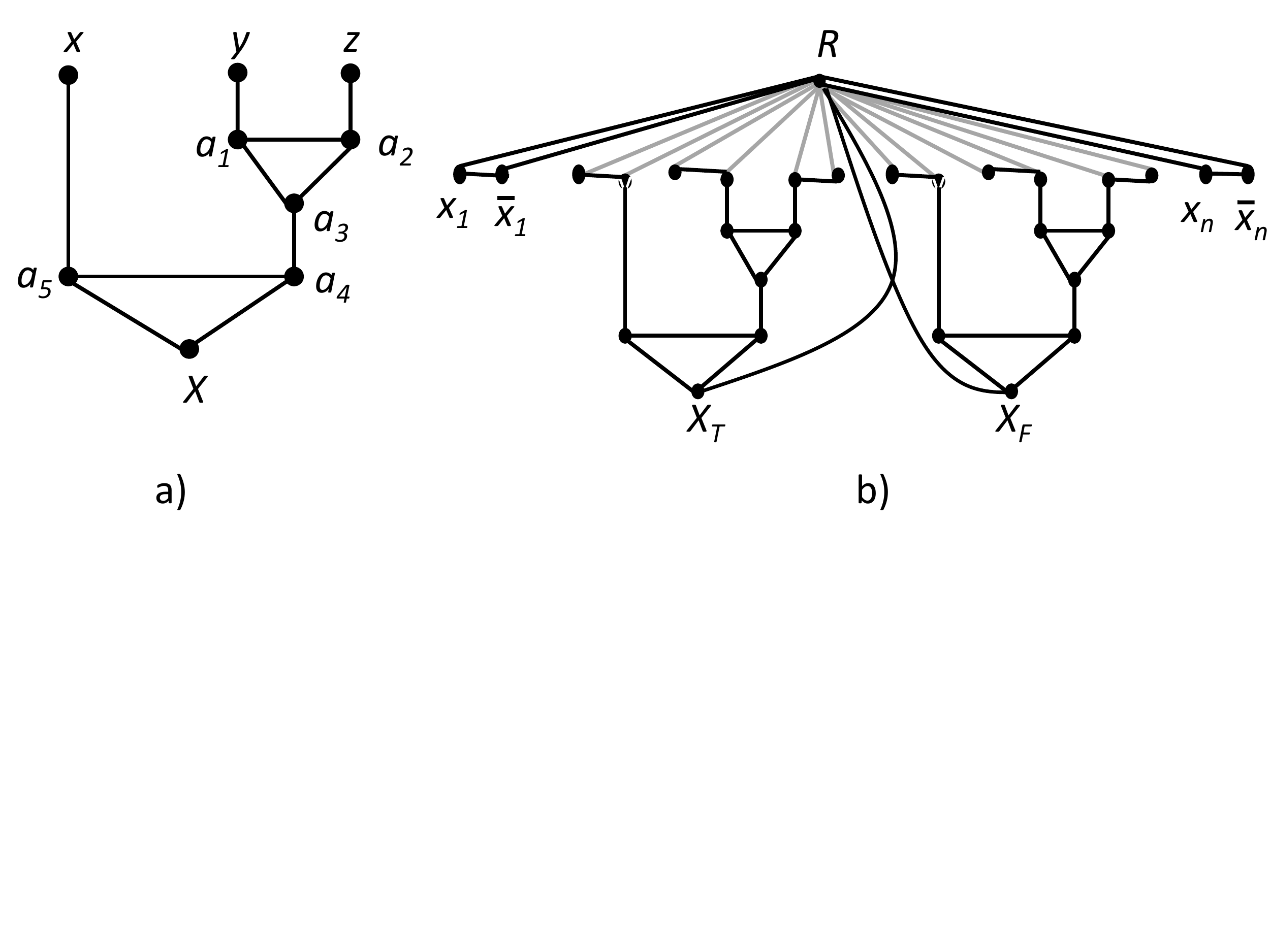}
\vspace{-3,5cm}
\caption{The Clause Gadget $\mathcal J$ and the instance $G$.}\label{fig:gadget}
\end{center}
\end{figure}

\begin{thm}\label{coNP}
 For all $k\geq 3$, {\sc All Odd Optimal $k$-Vertex Colorings} is coNP-complete.
\end{thm}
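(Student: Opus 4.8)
The plan is to establish coNP-completeness of {\sc All Odd Optimal $k$-Vertex Colorings} by reducing from a suitable NP-complete problem to its complementary problem, {\sc Some Even Optimal $k$-Vertex Coloring} (i.e., the question whether there exists a $\chi(G)$-vertex coloring with at least one color class of even size). Since membership in coNP is immediate — a ``no'' certificate is simply an optimal $k$-vertex coloring with an even color class, verifiable in polynomial time together with the provided clique of size $k$ guaranteeing $\chi(G)=k$ — the work is the hardness reduction. First I would handle the base case $k=3$ by reducing from a variant of {\sc 3-SAT}; then I would lift the construction to arbitrary $k\geq 3$ by adding a disjoint (or suitably attached) clique of size $k-3$ forced to use the $k-3$ extra colors, so that the ``interesting'' part of every optimal coloring still lives on the $3$-colorable core.

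For $k=3$, the construction uses the clause gadget $\mathcal J$ of Figure~\ref{fig:gadget}(a) together with Lemma~\ref{lem:gadget}. Starting from a {\sc 3-SAT} (or {\sc Not-All-Equal 3-SAT} / {\sc 1-in-3 SAT} — the exact variant to be pinned down so the parity bookkeeping works out) formula $\varphi$, I would build a graph $G$ whose $3$-colorings with colors $T,F,R$ on the variable-vertices and clause-vertices correspond to truth assignments: each variable $u_i$ gives a vertex forced to take a value in $\{T,F\}$ (via a triangle with a fixed $R$-colored vertex), each clause $C_j$ gives a copy of $\mathcal J$ with $x,y,z$ identified with the three literal-vertices of $C_j$ and $X$ a ``selector'' vertex. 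By Lemma~\ref{lem:gadget}, a proper $3$-coloring exists inducing a given truth setting on $X$ exactly when that clause is satisfied in the corresponding sense, and — crucially — the lemma pins down the \emph{exact multiset} of colors $\{1T,2F,2R\}$ or $\{1F,2T,2R\}$ used on $a_1,\dots,a_5$ inside each gadget. I would then add a small fixed number of pendant/auxiliary $R$-colored (and $T$- or $F$-colored) vertices, one global parity-adjustment block, so that the total size of each of the three color classes of $G$ has a controlled parity depending only on the number of satisfied clauses (or on whether $\varphi$ is satisfiable at all). The design goal: $G$ has an optimal coloring with some even color class if and only if $\varphi$ is satisfiable (equivalently, all optimal colorings have all classes odd iff $\varphi$ is unsatisfiable).

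For general $k\geq 3$, I would take the $k=3$ instance $G_3$ and form $G = G_3 \,+\, K_{k-3}$ where every vertex of the new $K_{k-3}$ is made adjacent to every vertex of $G_3$ (a join), and attach the $(k-3)$-clique of $G_3$'s required clique to these new vertices so the promised $k$-clique is present. Then $\chi(G)=k$, in every optimal $k$-coloring the $k-3$ new vertices receive $k-3$ private colors never used on $G_3$, the restriction to $G_3$ is an optimal $3$-coloring, and each of the $k-3$ private classes has size exactly $1$ (odd); hence ``some even class in $G$'' $\Leftrightarrow$ ``some even class in $G_3$''. A small parity padding on the new clique (e.g.\ doubling selected vertices into larger color-homogeneous independent sets, if the join is replaced by a more careful attachment) can absorb any parity offset the join introduces.

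The main obstacle I anticipate is the parity bookkeeping: ordinary colorability reductions only need \emph{existence} of a proper coloring, but here I must also control the \emph{parity of every color class simultaneously}, and this parity must encode satisfiability rather than, say, the number of true literals. This is exactly why Lemma~\ref{lem:gadget} is phrased to fix the precise color-count $1T,2F,2R$ versus $1F,2T,2R$: it lets me compute $|T|,|F|,|R|$ mod $2$ as affine functions of the number of clauses and of a few global counters, so that by inserting one carefully sized constant gadget I can force the ``all odd'' property to coincide with unsatisfiability. Getting every vertex of $\mathcal J$ and of the variable gadgets to have a forced or tightly constrained color (so that no ``cheating'' coloring flips a parity) — and verifying there are no unintended optimal colorings — will be the delicate part of the argument.
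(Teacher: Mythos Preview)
Your overall architecture matches the paper's: membership in coNP is immediate, hardness for $k=3$ goes through a reduction from {\sc 3-SAT} using the gadget $\mathcal J$ and Lemma~\ref{lem:gadget}, and the lift to general $k$ is by joining with a $(k-3)$-clique. The paper indeed uses plain {\sc 3-SAT}, variable gadgets that are simply edges $x_i\bar x_i$, and one global vertex $R$ adjacent to all literal vertices and to the $X$-vertices (so that in any 3-coloring those vertices are forced into the two non-$R$ colors); no separate triangle per variable is needed.

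However, you have correctly identified --- but not resolved --- the crux of the argument: the parity bookkeeping. Your proposal leaves the $X$-mechanism vague (``a `selector' vertex'') and postpones the parity control to an unspecified ``global parity-adjustment block''. The paper's key device, which you are missing, is this: fix an arbitrary truth assignment up front, partition the clauses into those it satisfies ($M_T$) and those it does not ($M_F$), preprocess so that $n$, $|M_T|$, and $|M_F|$ are all even, and introduce \emph{two} shared $X$-vertices $X_T$ and $X_F$ (both adjacent to $R$), with each clause gadget's $X$ identified with $X_T$ or $X_F$ according to the partition. The fixed assignment then witnesses that $G$ is 3-colorable (supplying the promised 3-coloring). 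An explicit count via Lemma~\ref{lem:gadget} now shows: whenever $X_T$ and $X_F$ receive \emph{different} colors, all three class sizes come out odd ($n+m_T+2m_F+1$, $n+m_F+2m_T+1$, $2m+1$); whereas if they receive the \emph{same} color, Lemma~\ref{lem:gadget} forces every clause to contain a literal of that color, yielding a satisfying assignment, and conversely a satisfying assignment yields a coloring with the $T$-class of even size $n+m+2$. Without this two-$X$ device (or something equivalent), your single-selector-plus-padding plan does not control the parities of 3-colorings that do \emph{not} correspond to satisfying assignments, so you cannot yet conclude ``all classes odd $\Leftrightarrow$ unsatisfiable''.
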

\begin{proof}
First,  {\sc All Odd Optimal $k$-Vertex Colorings} is in coNP, since one can verify the correctness of a ``no" certificate in linear time; in other words, given a $k$-vertex coloring with at least one color class of even size, one can check whether it is a vertex coloring using at most $k$ colors and whether it contains a color class with even number of vertices. In the sequel, we show that 3-SAT can be polynomially reduced to the complement of {\sc All Odd Optimal $k$-Vertex Colorings} with $k=\chi(G)=3$.

Let $I$ be an instance of 3-SAT with $n$ variables and $m$ clauses. We assume without loss of generality that $n$ is even (if not, add a variable that appears in no clause). Consider a truth assignment (for instance where all variables are true) and let $m_T$ and $m_F$ be respectively the numbers of true and false clauses. Denote also by $M_T$ and $M_F$ the corresponding sets of clauses, respectively.  Assume without loss of generality that both  $m_T$ and $m_F$ are even (otherwise just double each clause in $I$). We construct an instance $G$ of the complement of  {\sc All Odd Optimal $k$-Vertex Colorings} with $k=\chi(G)=3$ as follows (see Figure \ref{fig:gadget} (b)). For every variable, consider an edge between vertices $x_i$ and $\bar x_i$ representing the variable and its negation. Introduce two vertices $X_T$ and $X_F$ representing respectively clauses in $M_T$ and clauses in $M_F$. For every clause, consider a clause gadget $\mathcal J$ as in Figure \ref{fig:gadget} (a), whose $x,y,z$ vertices are identified with the variable vertices representing the tree literals (variables or their negations) in the related clause, and whose $X$ vertex is identified with $X_T$ if it is a clause in $M_T$ and with $X_F$ otherwise. Finally, add a vertex $R$ adjacent to all variable vertices and to $X_T$ and $X_F$. Clearly, this instance $G$ can be constructed in polynomial time in the size of the 3-SAT instance $I$. Moreover, it is $3$-vertex colorable; color vertex $R$ by color $R$, $X_T$ by $T$ and $X_F$ by $F$, color also all true literals by $T$ and all false literals by $F$. Now, one can check that this 3-vertex coloring can be extended to the clause gadgets corresponding to clauses in $M_T$ for all possible cases with 1, 2 or 3 true literals per clause. Similarly, it can be extended to the clause gadgets corresponding to clauses in $M_F$ and thus having all variable vertices colored $F$. Indeed, the only 3-vertex coloring of the three literals and $X_T$ (or $X_F$) with colors $T$ and $F$ which can not be extended to a 3-vertex coloring of the clause gadget is when the three literals are all assigned the color $F$ and $X_T$ is assigned $T$, or when the three literals are all assigned the color $T$ and $X_F$ assigned $F$; clearly, this never occurs in the 3-vertex coloring that we defined.

\begin{claim}
Let $I$ be a 3-SAT instance and $G$ a graph constructed as described above. Then, $I$ is satisfiable if and only if $G$ has a 3-vertex coloring with $X_T,X_F$ and variable vertices colored with $T$ or $F$, where at least one color class has even size.
\end{claim}
\begin{proof}
Consider a truth assignment satisfying all clauses of $I$. Then $G$ can be 3-colored with $T, F$ and $R$ as follows. Color $X_T, X_F$  and all vertices representing true literals by $T$. Color the vertex $R$ by color $R$. Now, one can check that for all possible cases with 1, 2 or 3 true literals per clause,  this coloring can be extended to a 3-vertex coloring of $G$. Moreover, by Lemma~\ref{lem:gadget} and by the fact that in any such 3-vertex coloring of $G$ there are exactly $n$ variable vertices assigned color $T$, in this 3-vertex coloring there are in total $n+m+2$ vertices colored by $T$, which is an even number (remind that $n, m_T$ and $m_F$ are even and $m=m_T+m_F$).

Now, consider a 3-vertex coloring $f$ of $G$ with $X_T, X_F$ and all variable vertices are colored with $T$ or $F$, where there is at least one color class of even size.   
Note that in any such 3-vertex coloring of $G$, there are $n$ variable vertices of color $T$ and $n$ variable vertices of color $F$. If $X_T$ and $X_F$ are colored with different colors, say by $T$ and $F$, respectively, then the numbers of occurrences of $T, F$ and $R$ are, respectively $n+m_T+2m_F+1, n+m_F+2m_T+1$ and $2m+1$, where $m=m_T+m_F$. The other case, where $f(X_T)=F$ and $f(X_F)=T$ yields the following numbers of occurrences for $T, F$ and $R$, respectively: $n+m_F+2m_T+1, n+m_T+2m_F+1$ and $2m+1$. It can be seen that all these numbers are odd, a contradiction.  Therefore $f(X_T)=f(X_F)$, and, using Lemma~\ref{lem:gadget}, a truth assignment satisfying all clauses can be obtained by giving the value true to the literals having the same color as $X_T$ and $X_F$. 
\end{proof}

Since 3-SAT is NP-complete~\cite{gj}, deciding whether a given graph with a clique of size 3 and a 3-vertex coloring admits an optimal vertex coloring (with 3 colors) with at least one color class of even size is NP-complete. It follows that  {\sc All Odd Optimal $k$-Vertex Colorings} is coNP-complete for $\chi(G)=k=3$. The coNP-completeness of  {\sc All Odd Optimal $k$-Vertex Colorings} for $\chi(G)=k\geq 3$ is obtained simply by adding a clique of size $k-3$ completely linked to the graph $G$ in the above reduction; clearly, the $k-3$ vertices of this clique should receive colors other than $T, F$ and $R$ and thus will not affect the above facts.
\end{proof}

Let $G$ be an instance of {\sc Improper Twin $k$-Edge Coloring} such that $k \equiv 2 \pmod 4$ (which can be checked in constant time). Now, it follows from Theorem \ref{thm:final} that the answer to  {\sc All Odd Optimal $k$-Vertex Colorings} for $G$ is ``yes" if and only if the answer to {\sc Improper Twin $k$-Edge Coloring} is ``no". In addition, we have already noticed that {\sc Improper Twin $k$-Edge Coloring} is in NP. Therefore {\sc Improper Twin $k$-Edge Coloring} and {\sc All Odd Optimal $k$-Vertex Colorings} are complementary problems and Theorem~\ref{coNP} and Theorem~\ref{thm:final} imply the following:

\begin{cor}
For all $k\geq 6$, $k \equiv 2 \pmod 4$, {\sc Improper Twin $k$-Edge Coloring} is NP-complete. It is polynomial for all other values of $k$.
\end{cor}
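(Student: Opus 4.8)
The plan is to dispose of the polynomial cases quickly and then reduce the NP-completeness to results already in hand. Membership of {\sc Improper Twin $k$-Edge Coloring} in NP has already been noted: from a purported improper twin $\chi(G)$-edge coloring $s$ one computes $c_s$ and verifies it is a proper vertex coloring, all in linear time. For the polynomial side, I would observe two things. If $k\not\equiv 2\pmod 4$, then by Theorem~\ref{thm:mainfromJCMCC} every nice graph $H$ with $\chi(H)=k$ satisfies $\chi'_{it}(H)=\chi(H)$; since every instance $G$ of {\sc Improper Twin $k$-Edge Coloring} has $\chi(G)=k$ (a $k$-clique forces $\chi(G)\ge k$ and the given $k$-vertex coloring forces $\chi(G)\le k$), the answer is always ``yes", and Theorem~\ref{thm:final} even exhibits a witness in linear time. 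If $k=2$, then any instance $G$ has $\chi(G)=2$ and is therefore bipartite, so by Theorem~\ref{bip} i) the answer is ``yes" exactly when no connected component of $G$ has both sides of its bipartition of odd size --- a linear-time check. This settles all $k$ other than those with $k\equiv 2\pmod 4$ and $k\ge 6$.

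For $k\equiv 2\pmod 4$ with $k\ge 6$, the heart of the argument is the observation that {\sc Improper Twin $k$-Edge Coloring} and {\sc All Odd Optimal $k$-Vertex Colorings} are complementary. They have literally the same instances, and on such an instance $G$ (with $\chi(G)=k\equiv 2\pmod 4$) Theorem~\ref{thm:mainfromJCMCC} gives $\chi'_{it}(G)=\chi(G)$ if and only if it is \emph{not} true that all color classes are of odd size in every $\chi(G)$-vertex coloring of $G$ --- that is, if and only if some $\chi(G)$-vertex coloring of $G$ has a color class of even size. Hence the ``yes" instances of one problem are exactly the ``no" instances of the other. Since $k\ge 6>3$, Theorem~\ref{coNP} tells us {\sc All Odd Optimal $k$-Vertex Colorings} is coNP-complete, so its complement is NP-complete, and therefore {\sc Improper Twin $k$-Edge Coloring} is NP-complete for these $k$.

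I do not expect a genuine obstacle here, since Theorems~\ref{thm:mainfromJCMCC}, \ref{thm:final} and~\ref{coNP} already carry the weight; the one point deserving care is making the ``complementary problems" claim watertight. Concretely one must check (i) that the two problems share the same instance set --- immediate from their input specifications --- and (ii) that on each instance exactly one of the two answers is ``yes", which is where one invokes $\chi(G)=k$ together with the dichotomy of Theorem~\ref{thm:mainfromJCMCC}. It is also worth remarking in passing that for $k\equiv 2\pmod 4$ with $k\ge 6$ the instances are automatically non-bipartite, since they contain $K_k$ with $k\ge 3$, so none of the special bipartite cases of the earlier lemmas needs separate attention.
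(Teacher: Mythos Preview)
Your proposal is correct and follows essentially the same approach as the paper: both establish NP membership, observe that for $k\equiv 2\pmod 4$ the two problems are complementary (you via Theorem~\ref{thm:mainfromJCMCC}, the paper via Theorem~\ref{thm:final}, but the content is the same), and then invoke Theorem~\ref{coNP} for hardness while the remaining values of $k$ are polynomial by the structural results. Your treatment of the polynomial cases---splitting off $k=2$ and handling $k\not\equiv 2\pmod 4$ as always-``yes'' instances---is in fact more explicit than the paper's, which simply states the corollary immediately after the complementarity remark.
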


The proof of Theorem~\ref{coNP} opens some natural questions about restricted hard and polynomial cases. We give first results in this direction that underline some open questions for future research. The first natural particular case is the class of graphs of bounded degree. The next proposition shows that, for any fixed $k\geq 3$, {\sc All Odd Optimal $k$-Vertex Colorings} is still hard in these graphs and consequently so does {\sc Improper Twin $k$-Edge Coloring} if $k\geq 6$, $k \equiv 2 \pmod 4$.  

\begin{prop}\label{prop:bounded_degree}
For all $k\geq 3$, {\sc All Odd Optimal $k$-Vertex Colorings} is coNP-complete in graphs of maximum degree $k+\left\lceil\sqrt{k-2}\hspace{1pt}\right\rceil$.\\
For $k\geq 6$, $k \equiv 2 \pmod 4$ {\sc Improper Twin $k$-Edge Coloring} is NP-complete in graphs of maximum degree $k+\left\lceil\sqrt{k-2}\hspace{1pt}\right\rceil$.	
\end{prop}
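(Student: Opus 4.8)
The plan is to re-run the reduction behind Theorem~\ref{coNP}, but starting from a degree-bounded version of 3-SAT and then thinning out the few ``global'' vertices of the construction, so that the resulting graph has maximum degree $k+\lceil\sqrt{k-2}\,\rceil$ while still containing a clique of size $k$ and satisfying $\chi(G)=k$. Once {\sc All Odd Optimal $k$-Vertex Colorings} is known to be coNP-complete on such graphs, the claim for {\sc Improper Twin $k$-Edge Coloring} in the same class of graphs follows for $k\ge 6$, $k\equiv 2\pmod 4$ exactly as in Theorem~\ref{coNP} and its corollary, since by Theorem~\ref{thm:final} the two problems are complementary on those instances; membership in coNP (resp.\ NP) is unaffected by the degree restriction. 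Concretely I would reduce from the variant of 3-SAT in which every variable occurs in a bounded number of clauses (still NP-complete), and keep verbatim the polynomial parity-normalisation used in Theorem~\ref{coNP} --- doubling clauses and adding a dummy variable so that $n$, $m_T$, $m_F$ are all even --- which clearly keeps the number of occurrences per variable bounded.

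In the graph $G$ produced by the reduction of Theorem~\ref{coNP} the vertices of unbounded degree are exactly (a)~the variable vertices $x_i,\bar x_i$, identified with literal slots of one clause gadget per occurrence, and (b)~the three hub vertices $R$, $X_T$, $X_F$, which are incident to $\Theta(n)$, $\Theta(m_T)$ and $\Theta(m_F)$ clause gadgets respectively (and $R$ moreover to every variable vertex). Source (a) vanishes under the bounded-occurrence hypothesis: each $x_i$ then lies in $O(1)$ clause gadgets and is otherwise joined only to $\bar x_i$ and to a single local substitute for $R$, so it has $O(1)$ degree in the three-colour part of the construction. For source (b) I would replace each hub by one private copy for each gadget it meets and force all copies of the same hub to receive a common colour, using bounded-degree gadgets that make two vertices agree on their colour in every proper $k$-colouring (for $k=3$, copies of $K_4$ minus an edge do this), arranged so that every copy and every auxiliary vertex keeps degree at most $k+\lceil\sqrt{k-2}\,\rceil$; this adds only $\Theta(n+m)$ new vertices, a fixed number of each colour per copy.

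The heart of the proof --- and the step I expect to be the main obstacle --- is to redo the parity bookkeeping of the Claim inside the proof of Theorem~\ref{coNP} after this surgery, while respecting the degree cap. One expresses the three colour-class sizes of a candidate colouring (one in which $X_T$, $X_F$ and all variable vertices are coloured $T$ or $F$) as affine functions of $n$, $m_T$, $m_F$ and of the constant per-copy gadget contributions, and then tunes the number of copies, the shape of the linking structure, and, if necessary, further rounds of clause doubling, so that an unsatisfiable formula forces all three class sizes to be odd whereas a satisfying assignment still yields a colouring with an even colour class; Lemma~\ref{lem:gadget} and the clause-gadget analysis are reused unchanged. Finally, for general $k\ge 3$ one has to supply the remaining $k-3$ colours without recreating a high-degree vertex: instead of a single $(k-3)$-clique completely joined to $G$, I would attach many bounded-degree copies of a $(k-3)$-colour-blocking clique, again tied together so that they all use the same $k-3$ colours, and join each ``active'' vertex to a balanced share of them, together with one clique of size $k$ to pin $\chi(G)=k$. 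Optimising the trade-off between the sizes of these blocking pieces and the number of vertices each must serve is what produces the bound $k+\lceil\sqrt{k-2}\,\rceil$, and the genuinely delicate point of the argument is meeting this degree bound and the parity requirements at the same time.
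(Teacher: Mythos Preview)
Your outline is a \emph{different} strategy from the paper, and as written it has real gaps. The paper does not re-open the 3-SAT reduction at all; instead it gives a \emph{self-reduction} from {\sc All Odd Optimal $k$-Vertex Colorings} on arbitrary graphs to the same problem on graphs of maximum degree $g(k)=k+\lceil\sqrt{k-2}\,\rceil$. The key is a single gadget $H(p,\ell)$ (with $\ell=1+\lceil\sqrt{k-2}\,\rceil$) that replaces any one vertex $x$ of degree $\Delta>g(k)$: take $p$ disjoint $(k-1)$-cliques $K_1,\dots,K_p$ (with $p$ even) and an independent set $S$ of odd size $s\ge\Delta$, wire $S$ to the cliques so that in every $k$-colouring all of $S$ gets a single colour while each $K_i$ uses each of the other $k-1$ colours exactly once, and distribute the old neighbours of $x$ by a matching into $S$. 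Then every $k$-colouring of the new graph corresponds to one of the old graph with the colour of $x$ equal to the common colour of $S$; the class sizes change by $s-1$ (even) for that colour and by $p$ (even) for each other colour, so all parities are preserved. The degree bound $g(k)$ drops out of the arithmetic: clique vertices have degree $(k-2)+\ell+1=g(k)$, and $S$-vertices have degree at most $(k-1)+\lceil(k-1)/\ell\rceil+1\le g(k)$ precisely because $\ell\approx\sqrt{k-2}$ balances these two contributions. Iterating removes all high-degree vertices.

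Against this, your plan leaves the two decisive points unproved. First, you assert that ``optimising the trade-off'' yields the bound $k+\lceil\sqrt{k-2}\,\rceil$, but nothing in your construction (bounded-occurrence 3-SAT, equality gadgets for the hubs, distributed $(k-3)$-cliques) makes this specific value appear; the $\sqrt{k-2}$ in the paper comes from a concrete degree calculation on a concrete gadget, and you would need to exhibit one. Second, you yourself flag the parity bookkeeping as ``the main obstacle'' and do not carry it out: with chains of $K_4$-minus-an-edge equality gadgets and many local $(k-3)$-cliques, each piece contributes fixed amounts to each colour class, and you must show that these contributions can be made \emph{even in every colour simultaneously} (or otherwise controlled) so that the Claim of Theorem~\ref{coNP} survives; ``tuning the number of copies'' is not an argument until you write down the parities. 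The paper sidesteps both issues by never touching the 3-SAT reduction and by designing the replacement gadget so that the parity shift is even by construction.
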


\begin{proof}
For the sake of simplicity we denote $g(k)=k+\left\lceil\sqrt{k-2}\hspace{1pt}\right\rceil$.

We show that for $k\geq 3$, {\sc All Odd Optimal $k$-Vertex Colorings} polynomially reduces to the same problem in graphs of maximum degree at most $g(k)$. 
Consider a graph $G$ of maximum degree $\Delta$ as an instance of {\sc All Odd Optimal $k$-Vertex Colorings} for some $k\geq 3$. If $\Delta > g(k)$, then we use the following gadget  to construct a graph $G'$ where the number of vertices of degree~$\Delta$ is one less (eventually 0) than in $G$ and there is no vertex of larger degree, such that $G$ is a $yes$-instance for {\sc All Odd Optimal $k$-Vertex Colorings} if and only if $G'$ is.

 Let $\ell = 1+ \left\lceil\sqrt{k-2}\hspace{1pt}\right\rceil$. Let $p=2\left\lceil \frac{\Delta}{2\ell}\hspace{1pt}\right\rceil+2$. So, $p$ is even and $\ell(p-1)\geq \Delta+1$.
$H(p,\ell)$ is defined as follows:
\begin{enumerate}
\item Consider $p$ independent copies of $(k-1)$-cliques $K_1, \ldots, K_p$.
\item \label{enum:s}Add an independent set $S$ of size $s\geq \Delta$ such that $s$ is odd and $\ell(p-1)-1\leq s\leq \ell(p-1)$. $S=\{x_{ij}, i=1, \ldots, p-1, j=1, \ldots, \ell\}$ if $\ell(p-1)\equiv 1  \pmod 2$ and $S=\{x_{ij}, i=1, \ldots, p-1, j=1, \ldots, \ell\}\setminus \{x_{p-1,\ell}\}$ if $\ell(p-1)\equiv 0  \pmod 2$. 
\item For $i=1, \ldots, p-1$, link by an edge every vertex of $K_i$ with vertices $x_{ij}, j\leq \ell$.
\item For $i=1, \ldots, p-1$, partition $K_{i+1}$ into $\ell$ parts of size at most $\left\lceil \frac{k-1}{\ell}\right\rceil$ and link by an edge vertex $x_{ij}$ with all vertices of part $j$, $j=1, \ldots, \ell$.
\end{enumerate}
Within this construction it is straightforward to verify that in any $k$-vertex coloring of $H(p,\ell)$, all vertices of $S$ are of the same color and each of the other colors is used exactly once in each clique $K_i, i=1, \ldots, p$.

\begin{figure}[ht]
\begin{center}
\includegraphics[scale=0.4]{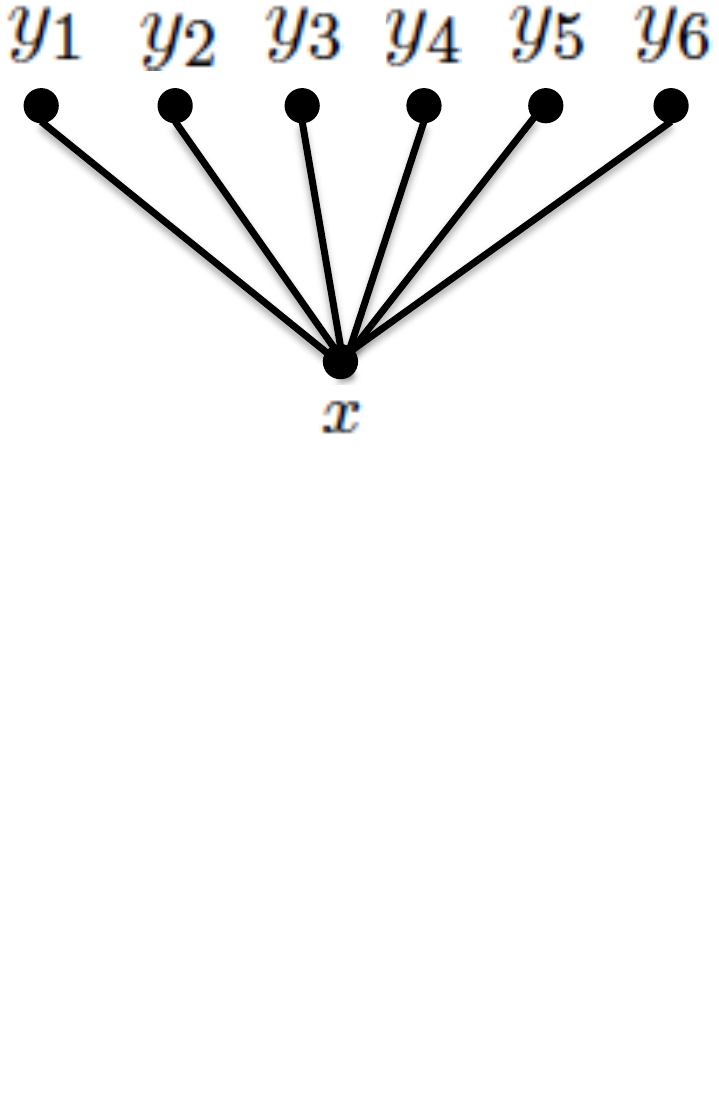}
\includegraphics[scale=0.4]{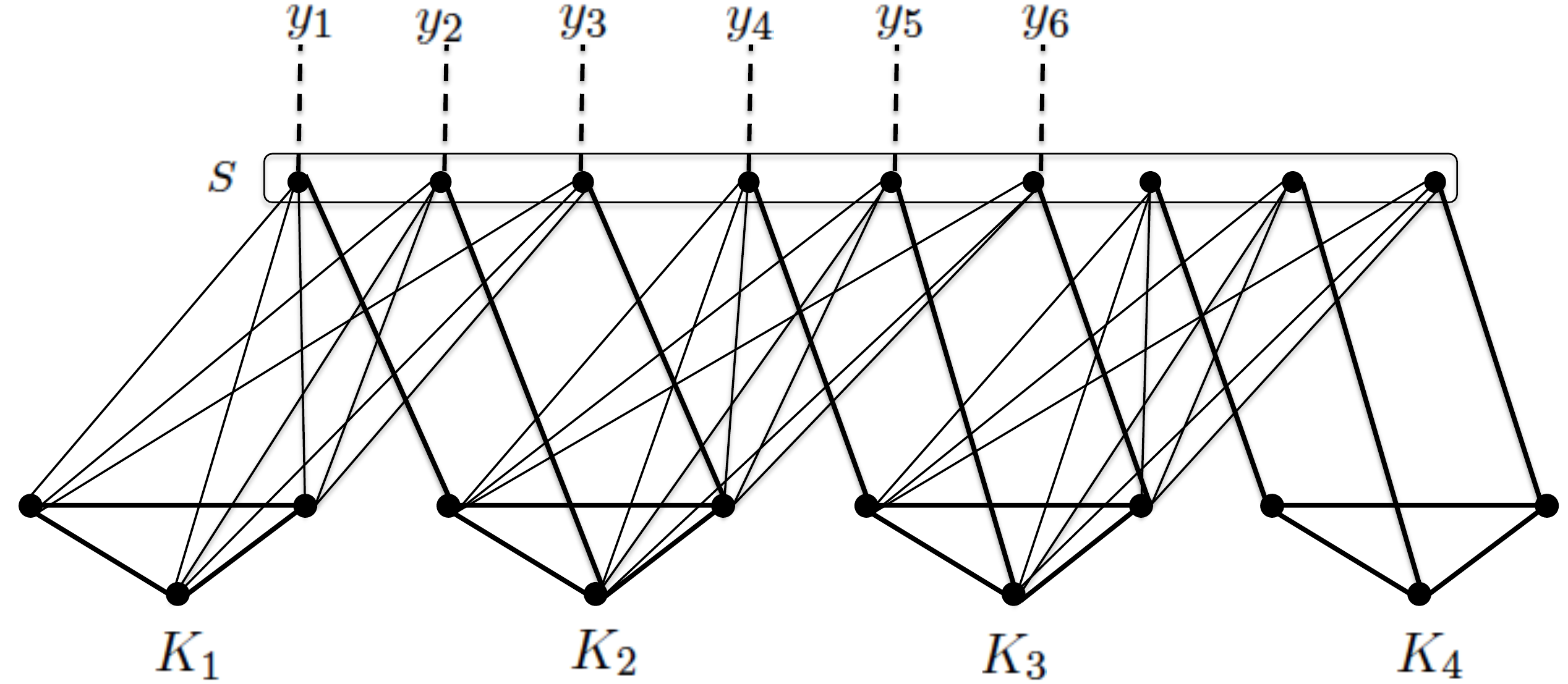}
\caption{The Gadget $H(4,3)$ and how it replaces a vertex $x$ of degree 6. $k=4, \ell=3, p=4$.}\label{fig:bound}
\end{center}
\end{figure}

Given the graph $G$ of maximum degree $\Delta$, if $\Delta > g(k)$, we consider a vertex $x$ of degree $\Delta$ and number its neighbors $y_1, \ldots, y_\Delta$. Then we transform $G$ as follows. We first remove $x$ as well as all its adjacent edges; then we add a copy of $H(p,\ell)$ and link by an edge $y_t$ with  $x_{\lceil \frac{t}{\ell}\rceil ,t-\ell(\lceil \frac{t}{\ell}\rceil -1)}$, $t=1, \ldots, \Delta$, so as to create a matching $M$ of size $\Delta$ between $S$ and the neighborhood of $x$. We denote by $G'$ the resulting graph (see Figure~\ref{fig:bound}). We evaluate the degree of vertices of $H(p,\ell)$ in the graph $G'$: vertices of the cliques $K_i$, $i=1, \ldots, p$ are of degree at most $k-2+\ell+1=g(k)$ and vertices of $S$ are (taking into account one possible edge of $M$) of degree at most  $k-1+\left\lceil \frac{k-1}{\ell}\right\rceil+1\leq k+\left\lceil \frac{k-2+1}{\lceil\sqrt{k-2}\rceil+1}\right\rceil \leq g(k)$. Since $g(k)<\Delta$, $G'$ has one less vertex of degree $\Delta $. 

Using the property of $k$-vertex colorings of $H(p,\ell)$, every $k$-vertex coloring of $G$ with $x$ of color $a$ can be transformed into a $k$-vertex coloring of $G'$ with $S$ colored $a$ and moreover, the number of vertices colored $a$ is increased by $s-1$ and the number of vertices in all other colors is increased by $p$. Conversely every $k$-vertex coloring of $G'$ with $S$ colored $a$ can be transformed into a $k$-vertex coloring of $G$ with $s-1$ vertices less colored $a$ and all other color classes have $p$ less vertices. Since $p$ and $s-1$ are both even, $G$ is a $yes$-instance for {\sc All Odd Optimal $k$-Vertex Colorings} if and only if $G'$ is. 

The construction from $G$ to $G'$ can be performed in polynomial time and by repeating it while there are still vertices of degree greater than $g(k)$ we show that {\sc All Odd Optimal $k$-Vertex Colorings} polynomially reduces to the same problem in graphs of maximum degree $g(k)$. Theorem~\ref{thm:final} allows to conclude the proof.
\end{proof}

Note that the result of Proposition~\ref{prop:bounded_degree} for $k=3$ could be obtained by using the gadgets $H_d$, $d\geq 3$, used in~\cite{gj}, page~86; however this gadget cannot be (easily) generalized for larger $k$. 

In particular, Proposition~\ref{prop:bounded_degree} for $k=3$ and for $k=6$ respectively implies that {\sc All Odd Optimal $3$-Vertex Colorings} is still coNP-complete in graphs of maximum degree~4 and {\sc Improper Twin $6$-Edge Coloring} is NP-complete in  graphs of maximum degree~8. It leaves open the cases with lower degrees. 

\section{Polynomial time solvable cases} \label{sec: poly}

In this section we exhibit some graph classes for which the problems under consideration are polynomial.
We will say that a vertex coloring is {\em  complete} if all colors are maximal independent sets. Equivalently, it means that all colors appear in the closed neighborhood of every vertex.

Let us start with the following remark which will shed light on the forthcoming results.

 \begin{rmk}\label{rem:maximal}\mbox{}\\
	(1) If for a  graph $G$ all color classes have an odd size in all $\chi(G)$-vertex colorings, then all $\chi(G)$-vertex colorings of $G$ are complete;\\
	(2)  All $\chi(G)$-vertex colorings of $G$ are complete if and only if for all connected component $H$ of $G$,  all $\chi(G)$-vertex colorings of $H$ are complete (in particular $\chi(H)=\chi(G)$).\\
	(3) If all $\chi(G)$-vertex colorings of $G$ are complete, then every vertex is of degree at least $\chi(G)-1$.  
\end{rmk}
\begin{proof}
(1) If a graph $G$ has a $\chi(G)$-vertex coloring such that one color class is not a maximal independent set, then if all colors are of odd size, it is possible to transfer one vertex from one color to a non-maximal color class so as to create two color classes of even size.

(2)  It is clear that if one component $H$ has an incomplete $\chi(G)$-vertex coloring (in particular if $\chi(H)<\chi(G)$), then $G$ has an incomplete $\chi(G)$-vertex coloring.

(3) If $G$ has a complete vertex coloring, then the closed neighbourhood of every vertex has at least $k$ vertices of different colors.
\end{proof}

The first polynomial case below, compared to   Proposition~\ref{prop:bounded_degree}, emphasizes that, for bounded degree graphs, there is still a gap of value $O(\sqrt{\chi(G)})$ for the maximum degree  between hard cases and polynomial cases.

\begin{rmk}
{\sc All Odd Optimal $k$-Vertex Colorings} is polynomial in graphs of maximum degree $k-1$.	The graph is a $yes$-instance if and only if it is the disjoint union of an odd number of $k$-cliques.
\end{rmk}

\begin{proof}
	 Consider a graph $G$ of maximum degree $k-1$, with $k=\chi(G)$. We distinguish three kinds of connected components of $G$: $k$-cliques (we denote by $p$ their number), odd cycles of length at least 5 (we denote by $q$ their number) and other components (we denote by $r$ their number). 
	
	Using Brook's theorem~\cite{brook}, all components that are neither a $k$-clique nor an odd cycle of length at least~5 are $(k-1)$-vertex colorable. So, if $r>0$, Remark~\ref{rem:maximal} allows to conclude that $G$ has a $\chi(G)$-vertex coloring with at least one color class of even size.   Odd cycles of size at least~5 are either $(k-1)$-vertex colorable as well if $k\geq 4$ or have a $k$-vertex coloring which is incomplete if $k=3$ (one can ensure that one color is of size~1). In both cases, if $q>0$, Remark~\ref{rem:maximal} also ensures that $G$ has a $\chi(G)$-vertex coloring with at least one color of even size.  
	The only case where all color classes are of odd size in all $k$-vertex colorings is if $r+q=0$ and $p$ is odd.
	 \end{proof}



We recall that a graph is a {\em split graph} if its vertex set can be decomposed into a clique and an independent set. Also, a graph $G$ is called {\em perfect} if every induced subgraph $G'$ of $G$ has a (maximum) clique of size~$\chi(G')$. It is well known (see e.g.~\cite{golumbic}) that a split graph $G$ can be decomposed into a maximum clique $K$  and an independent set $S$. Since split graphs are perfect, we have $|K|=\chi(G)$.

\begin{prop}
	Consider a split graph $G$ decomposed into a maximum clique $K$ of size $k=\chi(G)$ and an independent set $S$ of size $s$. Then all color classes are of odd size in all $k$-vertex colorings if and only if every vertex in the independent set has degree $(k-1)$ and the degree of every vertex in the clique is of the opposite parity as the number of vertices $(k+s)$.
\end{prop}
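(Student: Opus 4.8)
The plan is to use the characterization from Theorem~\ref{thm:mainfromJCMCC} (or rather the underlying Proposition~\ref{equal}), which says that ``all color classes are of odd size in all $k$-vertex colorings'' is precisely the obstruction preventing $\chi'_{it}(G)=\chi(G)$, together with the rigid structure of $k$-vertex colorings of a split graph. The first step is to understand all $k$-vertex colorings of $G$: since $|K|=k=\chi(G)$, the $k$ vertices of the maximum clique $K$ must receive $k$ distinct colors, so a $k$-vertex coloring of $G$ amounts to choosing, for each vertex $v\in S$, one of the colors not appearing in $N(v)\cap K$. In particular, if some $v\in S$ has degree strictly less than $k-1$, then $v$ has at least two admissible colors, and by swapping $v$ between two such color classes we change the parity of two classes; this shows (using Remark~\ref{rem:maximal}(1) or directly Proposition~\ref{equal}) that not all colorings can have all classes of odd size. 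Hence the condition ``every vertex of $S$ has degree $k-1$'' is necessary. When it holds, every $v\in S$ has a \emph{forced} color (the unique color of $K$ missing from $N(v)$), so the $k$-vertex coloring of $G$ is in fact \emph{unique}, and the color-class sizes are completely determined.

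The second step is then a parity computation on this unique coloring. Index the vertices of $K$ by their colors $1,\dots,k$; let $d_i$ be the degree of the $i$-th clique vertex. Each $v\in S$ has degree $k-1$ and is adjacent to exactly $k-1$ vertices of $K$, i.e.\ to all of $K$ except the clique vertex whose color equals the forced color of $v$. So if $a_i$ denotes the number of vertices of $S$ forced to color $i$, then the number of edges from $S$ to the $i$-th clique vertex is $s-a_i$, whence $d_i = (k-1) + (s-a_i)$, i.e.\ $a_i = k-1+s-d_i$. The size of color class $i$ is $1+a_i = k+s-d_i$. Therefore color class $i$ has odd size if and only if $k+s-d_i$ is odd, i.e.\ iff $d_i \not\equiv k+s \pmod 2$. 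So all $k$ color classes are odd exactly when $d_i$ has parity opposite to $k+s$ for every $i$, which is precisely the stated condition ``the degree of every vertex in the clique is of the opposite parity as $k+s$.'' Combining with the necessity of the degree-$(k-1)$ condition on $S$ established in step one, we get the equivalence. (One should also double-check that the $k$-vertex coloring obtained really is proper and uses exactly $k$ colors — it does, since the $K$-vertices already force $k$ colors and each $S$-vertex is validly colored by construction — so $\chi(G)=k$ is consistent.)

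The main obstacle I anticipate is not the parity arithmetic, which is routine, but making sure the logical structure of the ``if and only if'' is clean: one direction needs ``if the degree conditions hold then every $k$-vertex coloring (there is exactly one) has all classes odd,'' and the other needs ``if some $k$-vertex coloring has a class of even size then one of the two degree conditions fails.'' For the contrapositive of the second direction one must be slightly careful: if every $v\in S$ has degree $k-1$ the coloring is forced and uniqueness makes the argument immediate via the parity formula; the only remaining case is when some $v\in S$ has degree $<k-1$, and there one exhibits explicitly two colorings (or one coloring plus a swap) realizing classes of even size, invoking Proposition~\ref{equal} to translate ``even class size somewhere / parity flexibility'' into the failure of the all-odd property. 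A minor point to handle is degenerate cases such as $s=0$ (then $G=K_k$, the unique coloring has all classes of size $1$, odd, and the condition reads $d_i=k-1\not\equiv k\pmod 2$, which holds) and isolated-vertex or small-component issues, though for a connected split graph these do not arise; if $G$ is allowed to be disconnected one reduces to components via Remark~\ref{rem:maximal}(2), but since a split graph has at most one nontrivial component plus isolated vertices — and isolated vertices would force $k=1$ — this is not really an issue here.
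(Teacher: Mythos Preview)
Your proposal is correct and follows essentially the same approach as the paper: first argue via Remark~\ref{rem:maximal} (or a direct swap argument) that every $v\in S$ must have degree $k-1$, deduce that the $k$-vertex coloring is then unique up to relabeling, and finish with the parity computation identifying the size of each color class with $k+s-d_i$. Your write-up is in fact more explicit in the arithmetic than the paper's; the only superfluous element is the invocation of Proposition~\ref{equal}, which concerns twin edge colorings and is not needed here---the swap argument (or Remark~\ref{rem:maximal}(1)) already suffices.
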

\begin{proof}
Note first that if $S=\emptyset	$, then the condition holds and in this case all color classes are singletons. Suppose now $S\neq\emptyset	$.
Using Remark~\ref{rem:maximal}, a necessary condition to have all color classes of odd size in all $k$-vertex colorings is that the degree of every vertex in  $S$ is equal to $k-1$ (it cannot be more since $K$ is a maximum clique). 
Then there is a unique $k$-vertex coloring (up to the permutation of colors); each color class has odd size if and only if every vertex in  $K$ is not adjacent to an even number of vertices in $S$. It follows that for every vertex  $x\in K$ the number of adjacent vertices  in $S$ is of the same parity as $s$; moreover, $x$ is adjacent to all other $k-1$ vertices in $K$, this is equivalent to the fact that the degree of $x$ is of the opposite parity as the number of vertices $(k+s)$.
\end{proof}



	


A class of graphs is called {\em hereditary} if whenever a graph belongs to the class all its induced subgraphs also belong to it.

\begin{prop}\label{pro:cochordal}
Consider a hereditary class of graphs $\mathcal{G}$ for which all maximal independent sets and the chromatic number 
 can be computed in polynomial time. Then {\sc All Odd Optimal $k$-Vertex Colorings} and {\sc Minimum Improper Twin Edge Coloring} are polynomial in $\mathcal{G}$.
\end{prop}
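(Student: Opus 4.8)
The plan is to reduce both problems to a polynomial number of chromatic-number computations on \emph{induced} subgraphs of $G$, all of which stay in $\mathcal{G}$ by heredity. First I would compute $k=\chi(G)$, which is polynomial by hypothesis; note that since $\chi(G)=k$, every $k$-vertex coloring of $G$ uses all $k$ colors, so every color class is non-empty. The first claim is a reformulation of {\sc All Odd Optimal $k$-Vertex Colorings}: its answer is ``no'' (equivalently, some $\chi(G)$-vertex coloring has a color class of even size) if and only if there is an independent set $I$ of even cardinality with $\chi(G-I)=k-1$, where $G-I$ denotes the subgraph induced on $V\setminus I$. Indeed, any color class of a $k$-coloring is an independent set $I$ with $\chi(G-I)\le k-1$, while $\chi(G-I)\ge\chi(G)-1=k-1$ always holds; conversely, from such an $I$ one builds a $k$-coloring with even class $I$ by coloring $G-I$ with the remaining $k-1$ colors and $I$ with the last one. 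This is fully consistent with Remark~\ref{rem:maximal}(1).

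The key step is then to show that such an even independent set, if it exists, can be detected among (one-vertex modifications of) the \emph{maximal} independent sets of $G$, which is the only family we can enumerate. Call a maximal independent set $I'$ \emph{good} if $\chi(G-I')=k-1$. If $I$ is even with $\chi(G-I)=k-1$, extend it to a maximal independent set $I'\supseteq I$; since $G-I'$ is an induced subgraph of $G-I$ we get $\chi(G-I')\le k-1$, hence $\chi(G-I')=k-1$, so $I'$ is good. If $|I'|$ is even we are done; if $|I'|$ is odd, then $|I'|>|I|$ and the two parities differ, so there is $v\in I'\setminus I$, and $I'\setminus\{v\}\supseteq I$ yields $\chi\bigl(G-(I'\setminus\{v\})\bigr)=k-1$ by the same sandwiching. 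Conversely, a good $I'$ with $|I'|$ even, or a good $I'$ with $|I'|$ odd together with a vertex $v\in I'$ such that $\chi\bigl(G[(V\setminus I')\cup\{v\}]\bigr)=k-1$, immediately produces an even independent set (namely $I'$, respectively $I'\setminus\{v\}$) with the required property. Hence the answer to {\sc All Odd Optimal $k$-Vertex Colorings} is ``yes'' if and only if no good maximal independent set has even size and no good maximal independent set of odd size admits such a vertex $v$.

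The algorithm follows: enumerate all maximal independent sets (polynomially many, listable in polynomial time by hypothesis), compute $\chi(G-I')$ for each (polynomial, as $G-I'\in\mathcal{G}$) to identify the good ones, and for each good $I'$ of odd size and each $v\in I'$ compute $\chi\bigl(G[(V\setminus I')\cup\{v\}]\bigr)$; the characterization above then decides {\sc All Odd Optimal $k$-Vertex Colorings}. For {\sc Minimum Improper Twin Edge Coloring}, Theorem~\ref{thm:mainfromJCMCC} (equivalently Theorem~\ref{thm:final}) gives $\chi'_{it}(G)=\chi(G)+1$ exactly when $\chi(G)\equiv 2\pmod 4$ and {\sc All Odd Optimal $k$-Vertex Colorings} answers ``yes'', and $\chi'_{it}(G)=\chi(G)$ otherwise, so this value is now polynomially computable (and Theorem~\ref{thm:final} even constructs a corresponding edge coloring in linear time once $k$ and the parity condition are known).

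The main obstacle is the equivalence in the second paragraph: one must argue carefully that restricting attention to maximal independent sets — the only sets we can list — does not lose the even color class we are looking for, which is precisely what forces the ``shrink-by-one-vertex'' case and the accompanying parity bookkeeping. Everything else (the reformulation, the complexity count) is routine once $\chi$ is computable on induced subgraphs in $\mathcal{G}$.
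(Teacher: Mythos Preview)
Your proposal is correct and follows essentially the same two-phase strategy as the paper: enumerate the maximal independent sets, test the even ones for $\chi(G-I')=k-1$, and for the odd ones test each one-vertex deletion $I'\setminus\{v\}$. Your equivalence argument (extend an even $I$ to a maximal $I'$, then sandwich $\chi$ along $I\subseteq I'\setminus\{v\}\subseteq I'$ when $|I'|$ is odd) is exactly the content of the paper's two phases, only stated more explicitly than the paper's somewhat informal ``transferring some vertices'' justification.
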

\begin{proof}
For any graph $G\in \mathcal{G}$, to decide whether $G$ admits an optimal coloring with at least one color class including an even number of vertices, we proceed in two phases. First enumerate all maximal independent sets with an even number of vertices and test  for each one whether the graph obtained by removing its vertices has chromatic number $\chi(G)-1$. If it is the case for at least one independent set of even size,
 then this independent set is a color of even size in at least one optimal vertex coloring. 
 Else consider all maximal independent sets $S_0$ with odd number of vertices  and for every $x\in S_0$ remove $S_0\setminus \{x\}$ and test whether the remaining graph has chromatic number $\chi(G)-1$. If it is the case for some $S_0$, then there is also an optimal coloring with a color class of even size.

Suppose now that the two phases failed to find  a coloring with an even color class. From the first phase, we know that every complete $\chi(G)$-vertex coloring has all color classes of odd size. Note finally that, if there is an incomplete optimal coloring, then by transferring some vertices from other colors to a non-maximal color class we can construct an optimal coloring where one color is a maximal independent set minus one vertex. This color class cannot be of odd size if the first phase failed.  If the second phase also failed, it cannot be of even size neither, and consequently all  optimal colorings are complete and thus all color classes are of odd size in  all optimal colorings. We conclude by Theorem \ref{thm:final}. \end{proof}

A graph is {\em chordal} (or triangulated) if every cycle of size greater than~3 has a {\it chord}, that is an edge between two non consecutive vertices in the cycle. A graph is co-chordal if its complementary is chordal.
It is known that in chordal graphs all maximal cliques can be computed with complexity $O(|V|+|E|)$~(\cite{golumbic}). Since co-chordal graphs are perfect, Proposition~\ref{pro:cochordal} applies.

\begin{cor}
{\sc All Odd Optimal $k$-Vertex Colorings} and {\sc Minimum Improper Twin Edge Coloring} are polynomial in co-chordal graphs.	
\end{cor}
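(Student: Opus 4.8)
The plan is to obtain the corollary as a direct instantiation of Proposition~\ref{pro:cochordal}, so the whole task reduces to verifying that the class of co-chordal graphs satisfies the three hypotheses of that proposition: it is hereditary, the maximal independent sets of a co-chordal graph can be enumerated in polynomial time, and the chromatic number of a co-chordal graph can be computed in polynomial time. Once these three points are established there is nothing left to do, since Proposition~\ref{pro:cochordal} then yields that both {\sc All Odd Optimal $k$-Vertex Colorings} and {\sc Minimum Improper Twin Edge Coloring} are polynomial in this class.

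First I would check heredity. The class of chordal graphs is closed under taking induced subgraphs, and for any $S\subseteq V(G)$ one has $\overline{G[S]}=\overline{G}[S]$; hence if $\overline{G}$ is chordal, so is $\overline{G}[S]$, i.e.\ $G[S]$ is co-chordal. Next I would use the standard correspondence between maximal independent sets of $G$ and maximal cliques of $\overline{G}$: a set $I$ is a maximal independent set of $G$ if and only if it is a maximal clique of $\overline{G}$. Since $\overline{G}$ is chordal, it has at most $|V|$ maximal cliques and all of them can be listed in $O(|V|+|E(\overline{G})|)$ time as recalled above; translating back, $G$ has at most $|V|$ maximal independent sets, all computable in polynomial time.

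Finally, co-chordal graphs are perfect, so $\chi(G)=\omega(G)$, and $\omega(G)$ equals the independence number $\alpha(\overline{G})$ of the chordal graph $\overline{G}$, which is computable in polynomial (indeed linear) time via a perfect elimination ordering; thus the chromatic number of a co-chordal graph is computable in polynomial time. This completes the verification and Proposition~\ref{pro:cochordal} applies. There is no genuine obstacle here; the only point deserving a moment of care is the heredity check, because the algorithm of Proposition~\ref{pro:cochordal} repeatedly deletes vertex subsets and queries the chromatic number of the resulting induced subgraph, so we must know that these subgraphs remain co-chordal in order for that subroutine to stay polynomial.
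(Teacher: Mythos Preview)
Your proof is correct and follows essentially the same approach as the paper: the paper simply notes that maximal cliques of chordal graphs can be listed in $O(|V|+|E|)$ time (giving the maximal independent sets of co-chordal graphs) and that co-chordal graphs are perfect, then invokes Proposition~\ref{pro:cochordal}. Your version is a bit more thorough---in particular you explicitly check heredity and spell out how perfection yields a polynomial chromatic-number computation---but the argument is the same.
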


We finally consider the class of co-comparability graphs that includes for instance interval graphs and permutation graphs (see~\cite{golumbic} for definitions of these classes). A graph is a {\em comparability} graph if it has a transitive orientation of its edges, that is an orientation of edges such that, for every two adjacent edges  $xy$ and $yz$, if $xy$ is oriented from $x$ to $y$ and $yz$ is oriented from $y$ to $z$, then $xz$ is an edge and it is oriented from $x$ to $z$. A graph is a co-comparability graph if its complement is a comparability graph.  A co-comparability graph $G=(V,E)$ has a {\em co-comparability order} that is an order of vertices $x_1, \ldots, x_n$ such that $\forall a<b<c$, if $x_ax_b\notin E$ and $x_bx_c\notin E$, then $x_ax_c\notin E$~\cite{cocomparability}. 

For any $k$-vertex coloring ($k\geq \chi(G)$) we will call the {\em type} of this coloring the list $[n_1, \ldots, n_{k}]$ of the sizes of the different colors (we may have $n_a=0$ for some $a$). Of course $\sum_{a=1}^{k}n_a=|V|$. In order to avoid useless repetitions of types due to the order of colors, we systematically write a type with colors ordered in  non-decreasing order of their size.

\begin{prop}\label{pro:cocomp}
	Let $G$ be a co-comparability graph of bounded chromatic number, then all possible types of $\chi(G)$-vertex colorings can be computed in polynomial time.
\end{prop}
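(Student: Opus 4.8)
The plan is to reformulate $\chi(G)$-vertex colourings of a co-comparability graph as partitions of the vertex set into independent sets that behave well along a co-comparability order, and then enumerate all their types by a dynamic program sweeping the vertices in that order. Write $k=\chi(G)$; this is a constant by hypothesis, and it can be computed in polynomial time (co-comparability graphs are perfect, so the chromatic number equals the maximum clique size and is computable, e.g.\ as a minimum chain cover). Fix a co-comparability order $x_1,\dots,x_n$ of $G$ as in~\cite{cocomparability}; such an order can be produced in polynomial time. The key structural observation is the following: if $S=\{x_{i_1},\dots,x_{i_t}\}$ with $i_1<\dots<i_t$ is an independent set of $G$, and $j>i_t$ satisfies $x_{i_t}x_j\notin E$, then $S\cup\{x_j\}$ is again independent. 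Indeed, for every $\ell<t$ we have $x_{i_\ell}x_{i_t}\notin E$ and $x_{i_t}x_j\notin E$ with $i_\ell<i_t<j$, so the defining property of a co-comparability order yields $x_{i_\ell}x_j\notin E$. Hence, when a colour class is built from left to right, whether a new vertex can be added to it depends only on the \emph{top} of that class, i.e.\ on the largest-index vertex already assigned to it.

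This suggests a dynamic program that processes $x_1,x_2,\dots,x_n$ in order. A \emph{state} records, for each of the $k$ colour classes, (a) the index of its current top vertex, with a dummy value $0$ for a class that is still empty, where $0$ is declared non-adjacent to every vertex, and (b) the number of vertices currently assigned to it. To process $x_i$ from a state reached after $x_1,\dots,x_{i-1}$, branch over every colour class $c$ whose current top $x_j$ satisfies $x_jx_i\notin E$, producing a successor state in which the top of $c$ becomes $i$ and the size of $c$ is increased by one; by the structural observation this indeed corresponds to extending a partition of $\{x_1,\dots,x_{i-1}\}$ into $k$ independent sets to one of $\{x_1,\dots,x_i\}$. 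Conversely, every $k$-colouring of $G$, read class by class in the fixed order, is produced by exactly one run of this DP, and every reachable state genuinely corresponds to a partition into $k$ independent sets (each class is a chain along the order, hence independent). After $x_n$ has been processed we collect, over all reachable states, the multiset of the $k$ class sizes written in non-decreasing order; by the correspondence just described this collection is exactly the set of types of $\chi(G)$-vertex colourings.

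It remains to check that the DP runs in polynomial time, which is where the boundedness of $\chi(G)$ is used. Each of the $k$ "top" coordinates takes one of at most $n+1$ values and each of the $k$ "size" coordinates one of at most $n+1$ values, so the number of states is at most $(n+1)^{2k}$, which is polynomial in $n$ since $k$ is a constant. There are $n$ processing steps, each touching every current state and performing $O(k)$ adjacency look-ups per state, so the whole computation runs in $O\!\left(k\,n^{2k+1}\right)$ time. (One could further quotient the state space by permuting the $k$ interchangeable colours, keeping the tops sorted by index, but this refinement is not needed for polynomiality.)

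The main obstacle I expect is not the running-time bookkeeping but the structural lemma underpinning the DP: the entire approach collapses unless appendability of a vertex to a colour class is governed solely by the top of that class. This is exactly what transitivity of non-adjacency along a co-comparability order supplies, and it is the only place where we genuinely use that $G$ is co-comparability rather than an arbitrary perfect graph of bounded chromatic number.
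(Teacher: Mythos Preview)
Your proof is correct and follows essentially the same approach as the paper: both use a dynamic program that sweeps the vertices in a co-comparability order, with states recording for each of the $\chi(G)$ colour classes the current top vertex and the current size, and both rely on the same structural lemma that appendability of $x_j$ to a class depends only on non-adjacency with its top vertex. Your presentation is in fact a bit cleaner on the state-count bound (the paper's stated bound $S_\ell\le (\ell+1)T$ appears to undercount the choices of top vertices, though the conclusion is unaffected).
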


\begin{proof}
We consider a co-comparability order $x_1, \ldots, x_n$ and process by dynamic programming. For a vertex $x_a$, $a$ will be called its {\em rank}. We denote by $x_0$ a fictive vertex considered as isolated. For any $1\leq \ell\leq n$, we denote by $G_\ell$ the subgraph of $G$ induced by $x_1, \ldots, x_\ell$. The number $T$ of  possible types of $\chi(G)$-vertex colorings of $G_\ell$ is bounded by $O(\ell^{\chi(G)})$, which is  polynomial in $n$ since $\chi(G)$ is bounded. For any coloring of type $[n_1, \ldots, n_{\chi(G)}]$ we denote the colors $c_1, \ldots, c_{\chi(G)}$ of size $n_1, \ldots, n_{\chi(G)}$, respectively. We define its {\em state} as the list $[ (n_1,y_1), \ldots, (n_{\chi(G)},y_{\chi(G))})]$, where $y_a$ is the vertex of  maximum rank in color $c_a$ if $n_a\neq 0$ and $y_a=x_0$ if $n_a=0$.  Denoting by $S_\ell$ the number of possible states of  $\chi(G)$-vertex colorings of $G_\ell$, we have $S_\ell\leq  	(\ell+1) T$. 

Note that, using the key property of the co-comparability order, for any independent set $\sigma$ of $G_\ell$, $\ell < n$, with  $x_a$ ($a\leq \ell$) the vertex of maximum rank in $\sigma$, $x_{\ell+1}\cup \sigma$ is independent if and only if $x_{\ell+1}$ and $x_a$ are not adjacent. 
Thus, from all possible states of  $G_\ell$ we can compute in polynomial time all possible states in $G_{\ell+1}$. Starting from the states $[(0,x_0), \ldots, (0, x_0), (1, x_1)]$ we can compute in polynomial time all states of $\chi(G)$-vertex colorings of $G$. The set of all possible types can be trivially deduced.
\end{proof}

The list of types allows immediately to detect all odd colorings and yields the following result using Theorem \ref{thm:final}.

\begin{cor}
	{\sc All Odd Optimal $k$-Vertex Colorings} and {\sc Minimum Improper Twin Edge Coloring} are polynomial in co-comparability graphs with bounded chromatic number.
\end{cor}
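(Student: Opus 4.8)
The plan is to derive both results directly from the enumeration of types furnished by Proposition~\ref{pro:cocomp} together with Theorem~\ref{thm:final}. Let $G$ be a co-comparability graph with $\chi(G)$ bounded by a constant. Since co-comparability graphs are perfect, $\chi(G)$ equals the clique number of $G$ and is computable in polynomial time, so the instance data (a $\chi(G)$-clique and a $\chi(G)$-vertex coloring) is polynomially verifiable; in particular any $\chi(G)$-vertex coloring uses each of the $\chi(G)$ colors at least once. By Proposition~\ref{pro:cocomp}, because $\chi(G)$ is bounded, we can compute in polynomial time the complete list $\mathcal{T}$ of types $[n_1,\dots,n_{\chi(G)}]$ realized by $\chi(G)$-vertex colorings of $G$.

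First I would dispatch {\sc All Odd Optimal $k$-Vertex Colorings}: its answer is ``yes'' exactly when every type in $\mathcal{T}$ has all of its entries odd (each entry being at least $1$, as noted above, so there is no spurious even $0$ entry). Since $|\mathcal{T}|$ is polynomial and testing a single type costs $O(\chi(G))$, this is decided in polynomial time.

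For {\sc Minimum Improper Twin Edge Coloring}, I would then invoke Theorem~\ref{thm:final} (equivalently, the characterization of Theorem~\ref{thm:mainfromJCMCC}): one has $\chi'_{it}(G)=\chi(G)+1$ if and only if $\chi(G)\equiv 2\pmod 4$ \emph{and} all color classes have odd size in every $\chi(G)$-vertex coloring, and $\chi'_{it}(G)=\chi(G)$ otherwise. The congruence is checked in constant time and the ``all odd'' condition by the previous paragraph, so $\chi'_{it}(G)$ — and hence the answer to the decision problem — is obtained in polynomial time; if an explicit optimal improper twin edge coloring is desired rather than just its size, one further application of the linear-time algorithm of Theorem~\ref{thm:final} to any fixed $\chi(G)$-vertex coloring produces it.

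The argument is essentially bookkeeping once Proposition~\ref{pro:cocomp} is available; the one point deserving care, and the reason the boundedness hypothesis cannot be dropped, is that the number of distinct types is $O(n^{\chi(G)})$ — polynomial only when $\chi(G)$ is bounded, in which case the scan over $\mathcal{T}$ is cheap. With unbounded $\chi(G)$ this list could be exponentially long and the method would fail.
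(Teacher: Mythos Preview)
Your proposal is correct and follows exactly the approach the paper intends: the paper's own justification is the single sentence ``The list of types allows immediately to detect all odd colorings and yields the following result using Theorem~\ref{thm:final},'' and your write-up simply unpacks this, invoking Proposition~\ref{pro:cocomp} to enumerate types and then Theorem~\ref{thm:final} (via Theorem~\ref{thm:mainfromJCMCC}) to settle both problems. Your added observations---that optimal colorings have no zero-size classes and that the $O(n^{\chi(G)})$ bound on types is what forces the boundedness hypothesis---are accurate elaborations rather than departures.
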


These simple examples of polynomial cases allow us to underline some specific situations leading to a polynomial algorithm. These algorithms are all based on an enumeration of  all possible color sizes. Cliques or connected bipartite graphs are uniquely colorable, which makes {\sc All Odd Optimal $k$-Vertex Colorings} trivial. The case of split graphs also reduces to the case of a uniquely colorable graph. Proposition~\ref{pro:cochordal} exhibits a situation where we can exhaustively look for one of the colors while  in the case of Proposition~\ref{pro:cocomp} the sizes of all color classes can be computed in polynomial time. Remark~\ref{rem:maximal} gives a completely different approach that does not require the explicit values of the sizes of color classes. It would be very interesting to find other classes where similar approaches can be applied as well as cases inducing completely new methods. 

In terms of graph classes, these examples underline open cases we leave for future research. Let us mention in particular the complexity in perfect graphs and in planar graphs as well as the case of interval graphs or permutation graphs with unbounded chromatic number. 

\vspace{0.2cm}
\bibliographystyle{abbrvwithurl}
\bibliography{}

\begin{thebibliography}{1}

\bibitem{vertexcol2} L. Addario-Berry, R.E.L. Aldred, K. Dalal, and B.A. Reed, Vertex colouring edge partitions, J. Combin. Theory (B) 94 (2005) 237-244.

\bibitem{and} E. Andrews, L. Helenius, D. Johnston, J. VerWys and Ping Zhang,  On twin edge colorings of graphs, Discussiones Mathematicae Graph Theory 34 (2014) 613-627.

\bibitem{groupsum} M. Anholcer, S. Cichacz, Group sum chromatic number of graphs, European Journal of Combinatorics 55 (2016) 73-81.

\bibitem{4CT} K. Appel, W. Haken, The Solution of the Four-Color Map Problem, Sci. Amer. 237 (1977) 108-121.


\bibitem{brook} R.L. Brooks, On colouring the nodes of a network, 
Mathematical Proceedings of the Cambridge Philosophical Society  37 (1941) 194–197.


\bibitem{chromaticGT} G. Chartrand, P. Zhang, Chromatic Graph theory, CRC Press, 2008.

\bibitem{discrete-notes}
G. Clarke, M. Demange, V. Roshchina, Lecture Notes - Discrete Mathematics, RMIT University.


\bibitem{gj}
M.R. Garey, D.S. Johnson,
Computers and Intractability, a Guide to the Theory of $\mathcal{NP}$-completeness,
Freeman, New York (1979).



\bibitem{golumbic} M.C. Golumbic, Algorithmic Graph Theory and Perfect Graphs, Computer Science and Applied Mathematics, Academic Press, 1980.


\bibitem{JCMCC11} R. Jones, K. Kolasinski, F. Okamoto and P. Zhang, Modular neighbor-distinguishing edge colorings of graphs, J. Combin. Math. Combin. Comput. 76 (2011), 159-175.

\bibitem{JCMCC12} R. Jones, K. Kolasinski, F. Okamoto and P. Zhang, On modular chromatic indexes of graphs, J. Combin. Math. Combin. Comput. 82 (2012), 295-306.

\bibitem{vertexcol1} M. Karonski, T. Luczak, and A. Thomason, Edge weights and vertex colours, J. Combin. Theory (B) 91 (2004) 151-157.

\bibitem{cocomparability}
D. Kratsch, L. Stewart, Domination on cocomparability graphs, SIAM Journal on Discrete Mathematics 6 (1993) 400-417.

\bibitem{planar} N. Robertson, D.P. Sanders, P. Seymour and R. Thomas, Efficiently four-coloring planar
graphs, In Proceedings of the twenty-eighth annual ACM symposium on Theory of computing, 571-575 ACM, 1996.

\bibitem{colorinduced} P. Zhang, Color-Induced Graph Colorings, Springer, 2015.

\end{thebibliography}



\end{document}